\newcommand{\Grad}{\nabla}
\newcommand{\Div}{\nabla \cdot}
\newcommand{\Md}{\partial}
\newcommand{\Ga}{\alpha}
\newcommand{\Gr}{\rho}
\newcommand{\Go}{\omega}
\newcommand{\GO}{\Omega}
\newcommand{\bfm}[1]{\mbox{\boldmath ${#1}$}}
\newcommand{\BGve}{\bfm\varepsilon}
\newcommand{\BGm}{\bfm\mu}
\newcommand{\BGr}{\bfm\rho}
\newcommand{\BGs}{\bfm\sigma}
\newcommand{\CI}{{\cal I}}
\newcommand{\CO}{{\cal O}}
\def\Ba{{\bf a}}
\def\Bc{{\bf c}}
\def\Bk{{\bf k}}
\def\Bn{{\bf n}}
\def\Bp{{\bf p}}
\def\Bt{{\bf t}}
\def\Bu{{\bf u}}
\def\Bv{{\bf v}}
\def\Bw{{\bf w}}
\def\Bx{{\bf x}}
\def\By{{\bf y}}
\def\Bz{{\bf z}}
\def\BA{{\bf A}}
\def\BB{{\bf B}}
\def\BC{{\bf C}}
\def\BD{{\bf D}}
\def\BF{{\bf F}}
\def\BG{{\bf G}}
\def\BI{{\bf I}}
\def\BR{{\bf R}}
\def\BS{{\bf S}}
\newcommand{\real}{\mathbb{R}}
\newcommand{\complex}{\mathbb{C}}
\newcommand{\tu}{\widetilde{u}}
\newcommand{\tU}{\widetilde{U}}
\newcommand{\tg}{\widetilde{g}}
\renewcommand{\hat}[1]{\widehat{#1}}
\newcommand{\abs}[1]{\left|{#1}\right|}
\newcommand{\M}[1]{\left({#1}\right)}
\newcommand{\Mb}[1]{\left[{#1}\right]}
\newcommand{\Mcb}[1]{\left\{{#1}\right\}}
\newcommand{\ceil}[1]{\lceil{#1}\rceil}
\newcommand{\conj}[1]{\overline{#1}}
\newcommand{\rlab}[2]{\raisebox{#2}{\rotatebox{90}{#1}}}
\newcommand{\secref}[1]{Sect.~\ref{#1}}
\newcommand{\figref}[1]{Fig.~\ref{#1}}
\newcommand{\thmref}[1]{Theorem~\ref{#1}}
\begin{document}
\title*{Transformation elastodynamics and active exterior acoustic cloaking}
\author{Fernando Guevara Vasquez, Graeme W. Milton, Daniel
Onofrei and Pierre~Seppecher}
\authorrunning{F. Guevara~Vasquez, G.W. Milton, D. Onofrei and P. Seppecher}
\institute{%
 Fernando Guevara Vasquez, \email{fguevara@math.utah.edu} \and 
 Graeme W. Milton, \email{milton@math.utah.edu} \and 
 Daniel Onofrei, \email{onofrei@math.utah.edu}, 
 \at Department of Mathematics, University of Utah, Salt Lake City, UT 84112, USA.
 \and
 Pierre Seppecher, \email{seppecher@imath.fr}, 
 \at Institut de Math\'ematiques de Toulon, Universit\'e de Toulon et du Var,
 BP 132-83957 La Garde Cedex, France.}
\maketitle

Coordinate transformations can be used to manipulate fields in a variety
of ways for the Maxwell and Helmholtz equations. In \secref{sec:ted} we
focus on transformation elastodynamics. The idea is to manipulate waves
in an elastic medium by designing appropriate transformations of the
coordinates and the displacements. As opposed to the Maxwell and
Helmholtz equations, the elastodynamic equations are not invariant under
these transformations. Here we recall the transformed elastodynamic
equations, and then move to the effect of space transformations on a
mass-spring network model. In order to realize the transformed networks
we introduce ``torque springs'', which are springs with a force
proportional to the displacement in a direction other than the direction
dictated by the spring terminals.  We discuss some possible
homogenizations of transformed networks that could have applications to
manipulating waves in an elastic medium for e.g. cloaking.

Then we look at an approach to cloaking which is based on cancelling the
incident field using active devices (rather than passive composite
materials) which are exterior to the cloaked region. Exterior means that
the cloaked region is not completely surrounded by the cloak, as is the
case in most transformation based methods. We present here active
exterior cloaking methods for both the Laplace equation in dimension two
(\secref{sec:static}) and the Helmholtz equation in dimension three
(\secref{sec:helm}). 

The cloaking method for the Laplace equation we present in \secref{sec:static}
applies also to the quasi-static (low frequency) regime and was in part
presented in \cite{Vasquez:2011:MAA,Vasquez:2009:AEC}. We first
reformulate the problem of designing an active cloaking device as the
classic problem of approximating analytic functions with polynomials.
This theoretical approach shows that it is possible to cloak an object
from an incident field with one single exterior device. Then we give an
explicit solution to the problem in terms of a polynomial and determine
its convergence region as the degree of the polynomial increases.  This
convergence region limits the size of the cloaked region, and for the
new solution we propose here it allows one to cloak larger objects at a
fixed distance from the device compared to the explicit polynomial
solution given in \cite{Vasquez:2011:MAA,Vasquez:2009:AEC}. We also
discuss how our approach can be modified to simultaneously hide an
object and give the illusion of another object, in the same spirit as
illusion optics \cite{Lai:2009:IOO}.

Next in \secref{sec:helm} we consider the Helmholtz equation and use the
same techniques as in \cite{Vasquez:2011:ECA} to show that in dimension
three it is possible to cloak an object using four devices and yet
leaving the object connected with the exterior. Our method is based on
Green's formula, which ensures that an analytic field can be reproduced
inside a volume by a carefully chosen single and double layer potential
at the surface of the volume. Then we use addition theorems for
spherical outgoing waves to concentrate the single and double layer
potential at a few multipolar sources (cloaking devices) located outside
the cloaked region. We determine the convergence region of the device's
field and include an explicit geometric construction of a cloak with
four devices.

The three sections of this chapter can be read essentially independently
of each other.

\section{Transformation elastodynamics}
\label{sec:ted}
Transformation based cloaking was first discovered by Greenleaf, Lassas
and Uhlmann \cite{Greenleaf:2003:ACC,Greenleaf:2003:NCI} in the context
of the conductivity equations. Independently, Leonhardt realized that
transformation based cloaking applies to geometric optics
\cite{Leonhardt:2006:OCM} and Pendry, Schurig and Smith
\cite{Pendry:2006:CEM} realized that transformation based cloaking
applies to Maxwell's equations at fixed frequency, and this led to an
explosion of interest in the field.  It was found that transformation
based cloaking also applies to acoustics
\cite{Cummer:2007:PAC,Chen:2007:ACT,Greenleaf:2007:FWI}, which is
governed by the Helmholtz equation, provided one permits anisotropic
density \cite{Schoenberg:1983:PPS}.  These developments, reviewed in
\cite{Alu:2008:PMC,Greenleaf:2009:CDE,Cai:2010:OM,Chen:2010:ACT} rely
on the invariance of the conductivity equations, Maxwell's equations,
and the Helmholtz equation under coordinate transformations, and have
been substantiated by rigorous proofs
\cite{Greenleaf:2007:FWI,Kohn:2008:CCV,Kohn:2010:CCV}.  The invariance
of Maxwell's equations under coordinate transformations has led to other
envisaged applications such as field concentrators \cite{Rahm:2008:DEC},
field rotators \cite{Chen:2009:DER}, lenses \cite{Schurig:2008:AFL},
superscatterers \cite{Yang:2008:SES} (see also
\cite{Nicorovici:1994:ODP}) and the name ``transformation optics'' is
now used to describe this research: see, for example, the special issue
in the New Journal of Physics \cite{Leonhardt:2008:FCT} devoted to
cloaking and transformation optics. The perfect lens of Pendry
\cite{Pendry:2000:NRM} can be viewed as the result of using a
transformation which unfolds space \cite{Leonhardt:2006:GRE} and
associated with such folding transformations is cloaking due to
anomalous resonance
\cite{Milton:2006:CEA,Nicorovici:2007:OCT,Milton:2008:SFG}.

A largely open question is how to construct metamaterials with the
required combination of anisotropic electrical permittivity $\BGve(\Bx)$
and anisotopic magnetic permeability $\BGm(\Bx)$ needed in
transformation optics designs, frequently with $\BGve(\Bx)=\BGm(\Bx)$.
Only recently was it shown \cite{Milton:2010:RMP}, building upon work of
Bouchitt\'e and Schweizer \cite{Bouchitte:2010:HME}, that any
combination of real tensors $(\BGve,\BGm)$ is approximately realizable,
at least in theory.

Curiously, the usual elastodynamic equations do not generally keep their
form under coordinate transformations. Either new terms enter the
equations \cite{Milton:2006:CEP}, so they take the form of equations
Willis introduced \cite{Willis:1981:VPDP} to describe the ensemble
averaged elastodynamic behavior of composite materials (which are the
analog of the bianisotropic equations of electromagnetism
\cite{Serdyukov:2001:EBAM}), or the elasticity tensor field does not
retain its minor symmetries \cite{Brun:2009:ACI}. Nevertheless, as shown
in \cite{Milton:2007:NMM} and as is explored further here, there is
some hope that metamaterials can be constructed with a response
corresponding approximately with that required by the new equations.  

\subsection{Continuous transformation elastodynamics}
By extending the analysis of \cite{Milton:2006:CEP}, let us show that
the equation of elastodynamics 
\begin{equation}
-\Div(\BC(\Bx)\Grad\Bu)=\Go^2\Gr(\Bx)\Bu 
\label{a.1}
\end{equation}
changes under the transformation
\begin{equation}
\Bx'=\Bx'(\Bx),\quad \Bu'(\Bx'(\Bx))=(\BB^T(\Bx))^{-1}\Bu(\Bx) 
\label{a.2}
\end{equation}
to the equation
\begin{equation}
-\nabla'\cdot(\BC'(\Bx')\Grad'\Bu'+\BS'(\Bx')\Bu')+\BD'(\Bx')\Grad'\Bu'-\Go^2(\BGr'(\Bx')\Bu')=0
\label{a.6a}
\end{equation}
where the tensors $\BC'$, $\BS'$, $\BD'$, $\BGr'$ are given in terms of
the functions $\Bx'$, $\BB$ and their derivatives. Here the
transformation of the displacement is governed by $\BB(\Bx)$ which can
be chosen to be any invertible matrix valued function. (The inverse and
transpose in $(\BB^T(\Bx))^{-1}$ have been introduced to simplify
subsequent formulae.) 

Indeed let us first note that
\begin{eqnarray} 
\Grad\Bu & = & \frac{\Md u_j}{\Md x_i}=\frac{\Md(u'_pB_{pj})}{\Md x_i} 
= \frac{\Md x_m'}{\Md x_i}\frac{\Md u'_p}{\Md x'_m}B_{pj}+ \frac{\Md
B_{pj}}{\Md x_i}u'_p \nonumber\\
& = & \BA^T(\Grad'\Bu')\BB+\BG'\Bu'
\label{a.3}
\end{eqnarray}
in which $\BA$ and $\BG$ are the tensors with elements
\begin{equation}
A_{mi}=\frac{\Md x_m'}{\Md x_i},\quad G_{ijp}=\frac{\Md B_{pj}}{\Md x_i}.
\label{a.4}
\end{equation}
Now \eqref{a.1} implies that for all smooth vector-valued test functions
$\Bv(\Bx)$ with compact support in a domain $\GO$,  
\begin{eqnarray}
& 0 & = \int_{\GO}[-\Div(\BC(\Bx)\Grad\Bu)-\Go^2\Gr(\Bx)\Bu]\cdot\Bv~\D\Bx \nonumber\\
& ~ & = \int_{\GO}[\BC(\Bx)\Grad\Bu:\Grad\Bv-\Go^2\Gr(\Bx)\Bu\cdot\Bv]~\D\Bx\nonumber\\
& ~ & = \int_{\GO'}[\BC(\Bx)(\BA^T(\Grad'\Bu')\BB+\BG\Bu'):(\BA^T(\Grad'\Bv')\BB+\BG\Bv')
-\Go^2\Gr(\Bx)(\BB^T\Bu')\cdot(\BB^T\Bv')]a^{-1}~\D\Bx'\nonumber\\
& ~ & =\int_{\GO'}[\BC'(\Bx')\Grad'\Bu':\Grad'\Bv'+\BS'(\Bx')\Bu':\Grad'\Bv'+(\BD'(\Bx')\Grad'\Bu')\cdot\Bv'
-\Go^2(\BGr'(\Bx')\Bu')\cdot\Bv']~\D\Bx'\nonumber\\
& ~ & =\int_{\GO'}[-\nabla'\cdot(\BC'(\Bx')\Grad'\Bu'+\BS'(\Bx')\Bu')+\BD'(\Bx')\Grad'\Bu'-\Go^2(\BGr'(\Bx')\Bu')]\cdot\Bv'~\D\Bx'\nonumber\\
\label{a.5}
\end{eqnarray}
in which the test function $\Bv(\Bx)$ has been transformed, similarly to
$\Bu(\Bx)$, to 
\begin{equation}
 \Bv'(\Bx'(\Bx))=(\BB^T(\Bx))^{-1}\Bv(\Bx),
\label{a.5a}
\end{equation}
and $a(\Bx'(\Bx))=\det \BA(\Bx)$ while $\BC'(\Bx')$, $\BS'(\Bx')$,
$\BD'(\Bx')$ and $\BGr'(\Bx')$ are the tensors with elements
\begin{eqnarray}
C'_{ijk\ell}& = & a^{-1}A_{ip}B_{jq}A_{kr}B_{\ell s}C_{pqrs},\nonumber\\
S'_{ijk} & = & a^{-1}A_{ip}B_{jq}G_{rsk}C_{pqrs}=a^{-1}A_{ip}B_{jq}\frac{\Md B_{ks}}{\Md x'_r}C_{pqrs},\nonumber\\
D'_{kij} & = & a^{-1}G_{pqk}A_{ir}B_{js}C_{pqrs}=S'_{ijk},\nonumber\\
\Gr'_{ij} & = & a^{-1}B_{ik}B_{jk}\Gr- a^{-1}\Go^{-2}G_{pqi}G_{rsj}C_{pqrs}\nonumber\\
& = &  a^{-1}B_{ik}B_{jk}\Gr
- a^{-1}\Go^{-2}\frac{\Md B_{iq}}{\Md x'_p}\frac{\Md B_{js}}{\Md x'_r}C_{pqrs}.
\label{a.6}
\end{eqnarray}
From \eqref{a.5} we see directly that \eqref{a.1} transforms to \eqref{a.6a}.

\begin{remark}
The transformed elastodynamic equation \eqref{a.6a} can be written in the
equivalent form of Willis-type equations \cite{Willis:1981:VPDP} 
\begin{eqnarray}
\nabla'\cdot\BGs' & = & -i\Go \Bp', \nonumber\\ \quad \BGs' & = &
\BC'(\Bx')\Grad'\Bu'+(i/\Go)\BS'(\Bx')(-i\Go\Bu'), \nonumber\\ \Bp'& = &
\BGr'(\Bx')(-i\Go\Bu')+(i/\Go)\BD'(\Bx')\Grad\Bu', 
\label{a.6b}
\end{eqnarray}
in which the stress $\BGs'$, which is not necessarily symmetric, depends
not only upon the dispacement gradient $\Grad'\Bu'$ but also upon the
velocity $-i\Go\Bu'$, and the momentum $\Bp'$ depends not only upon the
velocity $-i\Go\Bu'$, but also on the displacement gradient
$\Grad'\Bu'$.
\end{remark}

\begin{remark}
If we desire the transformed elasticity tensor $\BC'(\Bx')$ to have all
the usual symmetries of elasticity tensors, namely that
\begin{equation}
C'_{ijk\ell}=C'_{jik\ell}=C'_{k\ell ij}, 
\label{a.6c}
\end{equation}
then we need to restrict the transformations to those with $\BB=\BA$.
This was the case analyzed by Milton, Briane and Willis
\cite{Milton:2006:CEP}.
\end{remark}

\begin{remark}
In the particular case where $\BB=\BI$ the transformation \eqref{a.6}
reduces to
\begin{eqnarray} 
C'_{ijk\ell}& = & a^{-1}A_{ip}A_{kr}C_{pjr\ell}, \quad \BS'=\BD'=0,\quad \BGr'=a^{-1}\Gr\BI,
\label{a.7}
\end{eqnarray}
corresponding to normal elastodynamics, with an isotropic density matrix
$\BGr'$, but with an elasticity tensor $\BC'$ only satisfying the major
symmetry $C'_{ijk\ell}=C'_{k\ell ij}$. This was the case analysed by
Brun, Guenneau and Movchan \cite{Brun:2009:ACI} in a particular
two-dimensional example.
\end{remark}

Having derived the rules of transformation elasticity, one can then
apply the same variety of transformations as used in transformation
optics, including cloaking and folding transformations. The point is
that a wave propagating classically in the classical medium can have a
strange behavior in the new abstract coordinate system $\Bx'$. If we are
able to design a real medium following a system of equations equivalent
to the transformed system, then we are able to force a strange behavior
for waves in real physical space.

\subsection{Discrete transformation elastodynamics}
\label{sec:dte}
There is a discrete version of the transformation \eqref{a.7}. Suppose we
have a network of springs, possibly a lattice infinite in extent, with a
countable number of nodes at positions $\Bx_1, \Bx_2, \Bx_3, \ldots,
\Bx_n \ldots$, at which there are masses $M_1, M_2, M_3, \ldots,
M_n\ldots$, and at which the displacements are  $\Bu_1, \Bu_2, \Bu_3,
\ldots, \Bu_n \ldots$.  Let $k_{ij}$ denote the spring constant of the
spring connecting node $i$ to node $j$. There is no loss of generality
in assuming that all pairs of nodes are joined by a spring, taking
$k_{ij}=0$ if there is no real spring joining node $i$ and $j$.  Let
$\BF_{i,j}$ denote the force which the spring joining nodes $i$ and $j$
exerts on node $i$.  Hooke's law implies 
\begin{equation} 
\BF_{i,j}=-\BF_{j,i}=k_{i,j}\Bn_{i,j}[\Bn_{i,j}\cdot(\Bu_{j}-\Bu_i)], 
\label{a.8}
\end{equation}
where 
\begin{equation} 
\Bn_{i,j}=\frac{\Bx_j-\Bx_i}{|\Bx_j-\Bx_i|}, 
\label{a.9}
\end{equation}
is the unit vector in the direction of $\Bx_j-\Bx_i$.  In the absence of
any forces acting on the nodes, apart from inertial forces, Newton's
second law implies
\begin{equation} 
\sum_{j}\BF_{i,j}=-M_i\Go^2\Bu_i. 
\label{a.10}
\end{equation}
Now let us consider a transformation $\Bx'=\Bx'(\Bx)$ with an associated
inverse transformation $\Bx=\Bx(\Bx')$. Under this transformation the
position of the nodes transform to $\Bx'_1, \Bx'_2, \Bx'_3, \ldots,
\Bx'_n \ldots$, where $\Bx'_i=\Bx'(\Bx_i)$. We focus, for simplicity, on
the case corresponding to $\BB=\BI$ where the forces, masses and
displacements transform according to 
\begin{equation} 
\BF'_{i,j}=\BF_{i,j},\quad M'_i=M_i,\quad \Bu'_i=\Bu_i. 
\label{a.11}
\end{equation}
After the transformation, Newton's second law clearly keeps its form,
\begin{equation}  
\sum_{j}\BF'_{i,j}=-M'_i\Go^2\Bu'_i, 
\label{a.12}
\end{equation}
while \eqref{a.8} transforms to 
\begin{equation}
\BF'_{i,j}=-\BF'_{j,i}=k'_{i,j}\Bv'_{i,j}[\Bv'_{i,j}\cdot(\Bu'_{j}-\Bu'_i)] 
\label{a.13}
\end{equation}
where
\begin{equation} 
k'_{i,j}=k_{i,j},\quad \Bv'_{i,j}=\frac{\Bx(\Bx'_j)-\Bx(\Bx'_i)}{|\Bx(\Bx'_j)-\Bx(\Bx'_i)|}
\label{a.14}
\end{equation}

Hence in the new coordinates $\Bx'_i$ the system is governed by
equations similar to the classical  system of equations for a network of
masses joined by springs, but the response of the springs  does not
anymore correspond to normal springs.  While the action-reaction
principle $\BF'_{j,i}=-\BF'_{i,j}$ remains valid, the force $\BF'_{i,j}$
is not generally parallel to the line joining $\Bx'_j$ with $\Bx'_i$.

Now we desire to construct a real network having a behavior governed at
a fixed frequency, by the system of equations \eqref{a.12} and \eqref{a.13}.
To that aim we need to construct a two-terminal network made of
classical masses and springs which has the response \eqref{a.13} for any
unit vector $\Bv'_{i,j}$.  We call these two-terminal networks ``torque
springs'' since they extert a torque in addition to the usual spring
force. We show how they can be constructed for fixed frequency
$\omega$ in the next section.

\subsection{Torque springs}
A torque spring, being a two-terminal network with a response of the
type \eqref{a.13},  is characterized by two terminal nodes $\Bx_1$,
$\Bx_2$, the direction of exerted forces $\Bv_{1,2}$ which can be
different from the direction of the line joining $\Bx_1$ and $\Bx_2$ and
the constant of the spring $k_{1,2}$.  The existence of torque springs
is guaranteed by the work of Milton and Seppecher \cite{Milton:2008:RRM}
which provides a complete characterization of the response of
multiterminal mass-spring networks at a single frequency. The complete
characterization of the response of multiterminal mass-spring networks
as a function of frequency was subsequently obtained by Guevara~Vasquez,
Milton, and Onofrei \cite{Vasquez:2011:CCS}.  Here we are just
interested in constructing two terminal networks with the response of a
torque spring. In this case a simpler construction, than provided by the
previous work, is possible. 

Consider the network of \figref{fig:1}. For its design we start with
$\Bx_1$, $\Bx_2$ and a unit vector $\Bv=\Bv_{12}$  not parallel to
$\Bx_1-\Bx_2$.  (A normal spring can be used if $\Bv$ is parallel to
$\Bx_1-\Bx_2$.) Choose $\rho>0$ and define $\By_1=\Bx_1+ \rho \Bv$,
$\By_2=\Bx_2+ \rho \Bv$, and choose a vector $\Bw\ne 0$ in a direction
different from $\Bv$ and $\Bx_2-\Bx_1$. Define $\Bz_1=\By_1+ \Bw$,
$\Bz_2=\By_2+ \Bw$, $\Bt_1=\Bz_1+\Bv$, $\Bt_2=\Bz_2+\Bv$. The pairs
$(\Bx_1,\By_1)$, $(\Bx_2,\By_2)$, $(\By_1,\By_2)$, $(\By_1,\Bz_1)$,
$(\By_2,\Bz_2)$, $(\Bz_1,\Bz_2)$, $(\Bz_1,\Bt_1)$, $(\Bz_2,\Bt_2)$ are
joined with normal springs of constant $k$. Masses (with mass $m$, where
the lower case $m$ is used to identify them as internal masses of torque
springs) are attached to the nodes $\Bt_1$ and $\Bt_2$ only. All nodes
but $\Bx_1$ and $\Bx_2$ are interior nodes which means that no external
forces are exerted on them.

Let us denote by $T$ the tension in the spring $(\Bx_1,\By_1)$, taken to
be positive if the spring is under extension and negative if it is under
compression, i.e. the spring exerts a force $\Bv T$ on the terminal at
$\Bx_1$ and a force $-\Bv T$ on the node at $\By_1$. Then the balance of
forces at node $\By_1$ fixes the tensions $T'$, $T"$ in the springs
$(\By_1,\By_2)$, $(\By_1,\Bz_1)$ in a purely geometrical way.  It is
easy to check that the balance of  forces at $\By_2$, $\Bz_1$, and
$\Bz_2$ gives tensions $-T$, $-T"$, $-T'$, $T$, $-T$ in the springs
$(\Bx_2,\By_2)$, $(\By_2,\Bz_2)$, $(\Bz_1,\Bz_2)$, $(\Bz_1,\Bt_1)$,
$(\Bz_2,\Bt_2)$ respectively.

All the tensions being determined when one is known, the truss is rank
one : there is only one scalar linear combination of the displacements
$\Bu_1$, $\Bu_2$, $\Bw_1$, $\Bw_2$ of nodes $\Bx_1$, $\Bx_2$, $\Bt_1$,
$\Bt_2$ which influences $T$, and $T=0$ if and only if this scalar
linear combination vanishes. It is easy to check that this combination
is $(\Bu_2-\Bu_1-\Bw_2+\Bw_1)\cdot \Bv$ since displacements leaving this
zero (floppy modes) do not produce any tension in the springs, as they
leave the spring lengths invariant to first order in the displacements.
Hence there exists a constant $K$ (proportional to $k$) such that $T= K
(\Bu_2-\Bu_1-\Bw_2+\Bw_1)\cdot \Bv$. Finally Newton's law \eqref{a.10}
gives at nodes $\Bt_1$, $\Bt_2$ respectively $T=m \omega^2 \Bw_1\cdot
\Bv $ and $-T=m \omega^2 \Bw_2\cdot \Bv$ and so $T= K (\Bu_2-\Bu_1)\cdot
\Bv + 2 T K m^{-1}\omega^{-2}$ from which we conclude that
\begin{equation} 
T=\frac{K m\omega^2}{m \omega^2-2 K} (\Bu_2-\Bu_1)\cdot \Bv.
\label{a.15}
\end{equation}
The forces $\BF_1$ and $\BF_2$ which this torque spring exerts on
terminals 1 and 2, respectively, are therefore
\begin{equation} 
\BF_1=-\BF_2=T\Bv= k'\Bv[\Bv\cdot(\Bu_2-\Bu_1)],\quad {\rm
with~}k'=\frac{Km\Go^2}{m\Go^2-2K},
\label{a.17}
\end{equation}
which is exactly of the required form \eqref{a.13}. If we want $k'$ to be
positive then we should choose $m$ and $K$ so that $m\Go^2-2K>0$. 

There are many other constructions which produce torque springs. Another
configuration, which is closer in design to a normal spring, is that
given in \figref{fig:2}. In two-dimensions this type of construction may
be preferable to that in  \figref{fig:1} to reduce the number of spring
intersections when assembing a network of torque springs. It also may be
preferable if we wish to attach a torque spring say between two parallel
interfaces.

The torque springs described here are quite floppy. To give them some
structural integrity one would need to add a scaffolding of additional
springs, extending out of the plane if the torque springs are going to
be used in a three dimensional network.  Provided the spring constants
of these additional springs are sufficiently small, this can be done
with only a small perturbation to the response of the torque spring, as
shown in \cite{Vasquez:2011:CCS}.

In assembling a network of torque springs it may happen that an interior
spring or interior node of one torque spring intersects with an interior
spring or interior node or interior node of another torque spring. Since
we have the flexibility to move the interior nodes of each torque spring
we only need be concerned with the intersection of two springs, or
between the intersection of one spring and a node. In three dimensions
if a spring intersects with another spring or a node we can replace one
or both springs by an equivalent truss of springs to avoid this
situation. In two dimensions if a spring intersects with a node we can
again replace the spring by an equivalent truss to avoid this situation.
Then if two springs intersect in two dimensions they must either overlap
or cross: if they overlap we can replace each by an equivalent truss of
springs, while if they cross we can (within the framework of linear
elasticity) place a node at the intersection point and appropriately
choose the spring constants of the joining springs so that they respond
like two non-interacting springs -- see example 3.15 in Milton and
Seppecher \cite{Milton:2008:RRM}.

\begin{figure}
\begin{center}
\includegraphics[width=0.5\textwidth]{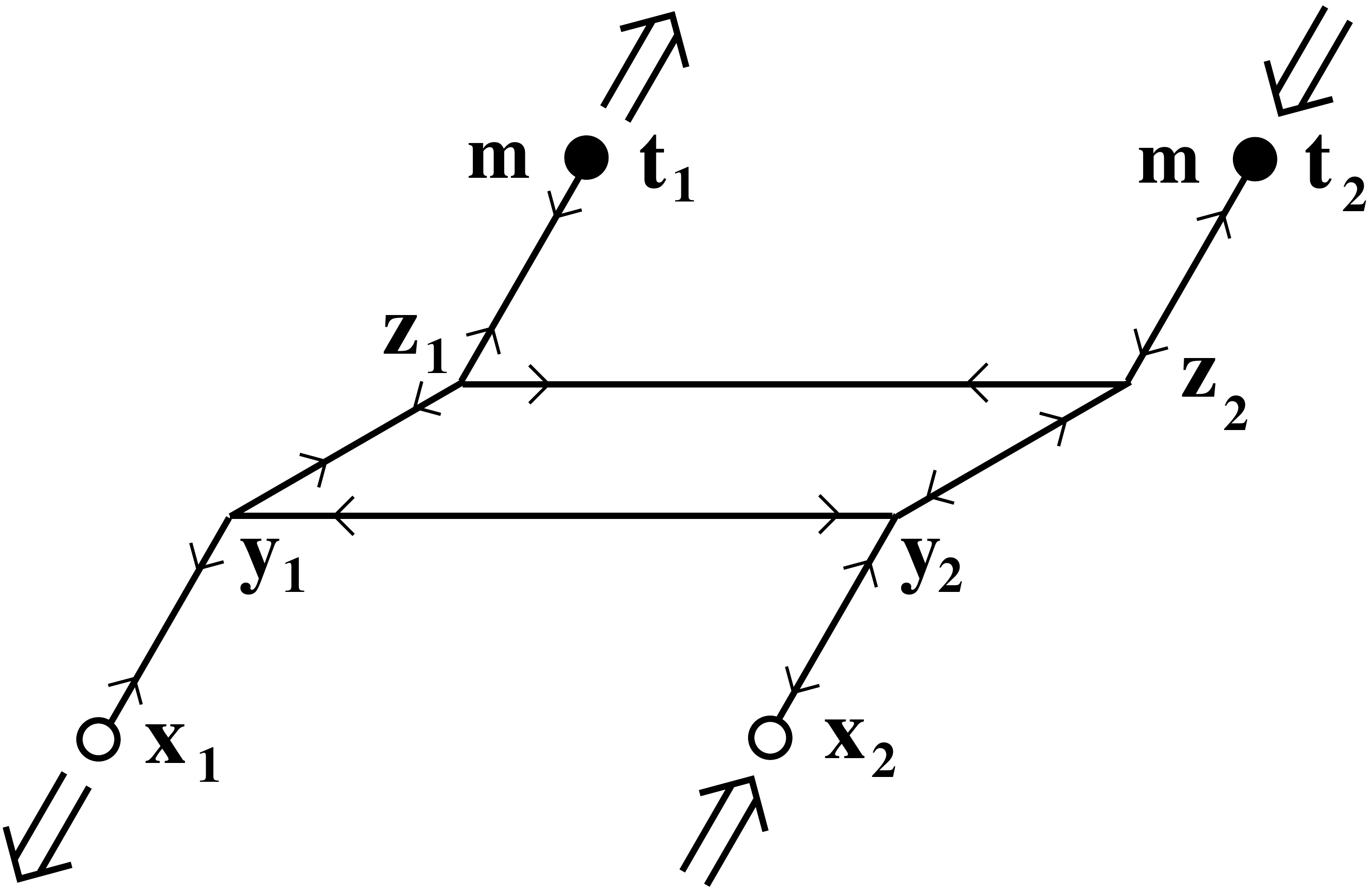}
\end{center}
\caption{Sketch of a torque spring. The open circles represent terminal
nodes, and the closed circles could be either terminal nodes or interior
nodes with masses attached. The straight lines represent springs. The
large arrows represent external or inertial forces acting on the nodes
at one instant in time. The two small arrows on each spring give the
direction of the force which the spring exerts on the node nearest to
the arrow.}
\label{fig:1}
\end{figure}

\begin{figure}
\begin{center}
\includegraphics[width=0.5\textwidth]{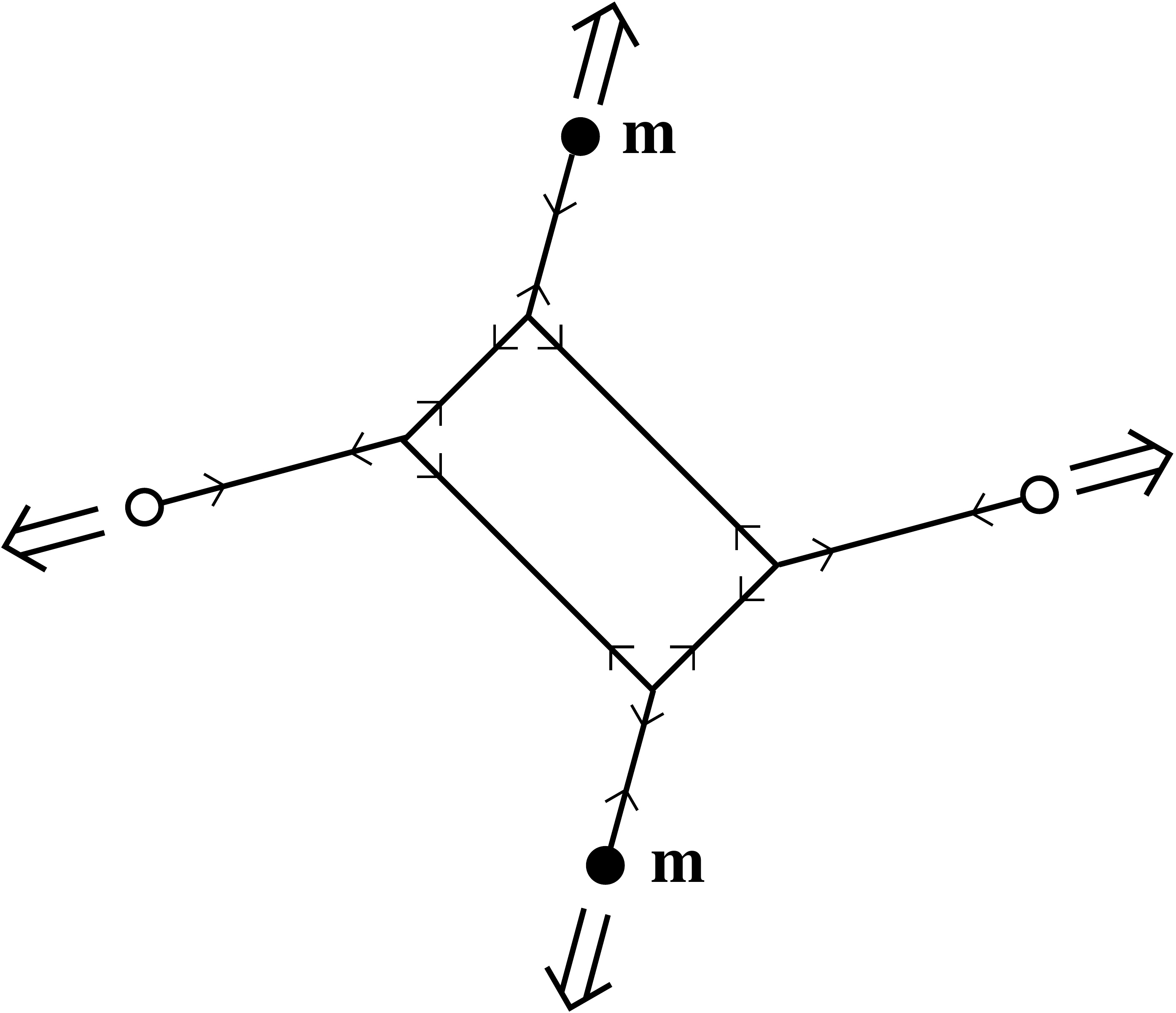}
\end{center}
\caption{An alternative construction of a torque spring. The straight
lines represent springs and the circles, large arrows, and small arrows
have the same meaning as in figure 1.}
\label{fig:2}
\end{figure}

\subsection{Homogenization of a discrete network of\\* torque springs}
As shown in \secref{sec:dte}, the original network of springs with nodes
at positions $\Bx_1, \Bx_2, \Bx_3, \ldots, \Bx_n \ldots$ and spring
constants $k_{ij}$ responds in an equivalent manner to the new network
of torque springs with nodes at positions $\Bx'_1, \Bx'_2, \Bx'_3,
\ldots, \Bx'_n \ldots$ and torque spring parameters given by \eqref{a.14}.
If the original network of springs homogenizes to an effective
elasticity tensor field $\BC(\Bx)$ then the new network of torque
springs homogenizes to an effective elasticity tensor field
$\BC'(\Bx)$ given by \eqref{a.7}, assuming the transformation $\Bx'(\Bx)$
only has variations on the macroscopic scale. In particular the stress
field in the homogenized network of torque springs is not be
symmetric, and is influenced not just by the local strain, but also
by the field of microrotations.

There are some practical barriers to this homogenization. Suppose, for
simplicity, that we are in two dimensions, that the original network
consists of a triangular network of identical springs with bond length
$h$ under uniform loading so that the tension is the same in all
springs, and that the transformation is a rigid rotation $\Bx'=\BR\Bx$
where $\BR^T\BR=\BI$. The displacement $\Bu_i$ of the nodes $\Bx_i$ is,
up to a translation, that of uniform dilation, $\Bu_i=\Ga\Bx_i$.  It
follows that if $i$ and $j$ are adjacent nodes on the network, then
$\Bu'_i-\Bu'_j$ scales in proportion to $h$. On the other hand, in order
that the traction force per unit length on a line remains constant the
tension $T$ in each torque spring must also scale in proportion to $h$.
Therefore the torque spring constant $k'=Km\Go^2/(m\Go^2-2K)$ must be
essentially independent of $h$. Also we don't want the density of mass
per unit area associated with the torque springs to be too large
(otherwise gravitational forces would be very significant). This would
be ensured if $m$ scales as $h^{\beta}$ where $\beta\geq 2$. Since 
\begin{equation} 
K=\frac{k'm\Go^2}{2k'+m\Go^2} 
\label{a.19}
\end{equation}
we see that $K$ should also scale as $h^{\beta}$, and that $2K$ would be
close $m\Go^2$ when $h$ is small. Thus each torque spring is very
close to resonance. If this is satisfied at one frequency, it will not
be satisfied at nearby frequencies. Thus the metamaterial is 
operational only within an extremely narrow band of frequencies. The
situation is similar in three dimensions in a network having bond
lengths of the order of $h$.  Then $\Bu'_i-\Bu'_j$, $T$, $k'$ and $m$
need to scale as $h$, $h^2$, $h$, and $h^{\beta}$, respectively, with
$\beta\geq 3$ to avoid an infinite mass density in the limit $h\to 0$.
($T$ must scale as $h^2$ to maintain a constant traction per unit area
on a surface).  Again $K$ given by \eqref{a.19} must be close to $m\Go^2/2$
when $h$ is small.

In three dimensions an alternative is to avoid the use of masses within
each torque spring altogether. This can be achieved by pinning the
internal nodes of the torque springs, where there would be masses (such
as at the nodes $\Bt_1$ and $\Bt_2$ in \figref{fig:1}), to a rigid
lattice (designed in a way which avoids intersection with the springs
inside the torque springs). Such a pinning corresponds to setting
$m=\infty$ and each torque spring has then a spring constant
$k'=K$ which is independent of frequency. The resulting metamaterial
is operational at all frequencies. Note that within the framework
of linear elasticity each torque spring exerts a torque but not a net
force on the underlying rigid lattice. If the rigid lattice (which might
have only finite extent) itself is not pinned we require that the
external forces on the metamaterial to be such that there is no net
overall torque on the rigid lattice.

A more serious concern is the validity of linear elasticity, at least
using the torque spring designs proposed here. A characteristic feature
of the designs involving masses is that the internal masses $m$ do not
move when the springs are translated, to first order in the
displacement.  This accounts for the balance of forces
$\BF'_{j,i}=-\BF'_{i,j}$. However the masses do move significantly if
the terminals are translated a distance which is comparable to the size
of the torque spring. Alternatively, if we pin the internal nodes of the
torque springs, where there would be masses, to a rigid lattice then
this restricts the motion of the torque spring terminals relative to the
lattice. Clearly for the operation of the metamaterial the displacements
$\Bu'_i$ must be small compared to $h$, assuming the size of each torque
spring is of order $h$.  When $h$ is very small this severely limits the
amplitude of waves propagating in the metamaterial for which linear
elasticity applies.  Thus the only metamaterials of the type described
here that might possibly be of practical interest are those  for which
$h$ is not too small.  This is in contrast to homogenization of a normal
elastodynamic network where linear elasticity may apply when only the
displacement differences $\Bu'_i-\Bu'_j$, between adjacent nodes  $i$
and $j$, are small compared to $h$.

\section{Active exterior cloaking in the quasistatic regime}
\label{sec:static}
We show that for the Laplace equation, it is possible for a device to
generate fields that cancel out the incident field in a region while not
interfering with the incident field far away from the device.  Our
results generalize to the quasistatic (low frequency) regime. Thus any
(non-resonant) object located inside the region where the fields are
negligible interacts little with the fields and is for all practical
purposes invisible.  Here we relate the problem of designing a cloaking
device to the classic problem of approximating a function with
polynomials. Then we propose a cloak design that is based on a family of
polynomials. We also show that our solution can be easily modified to
give cloak objects while giving the illusion of another object (illusion
optics as in \cite{Lai:2009:IOO}).

\subsection{Active exterior cloak design} 
Following the ideas presented in \cite{Vasquez:2009:AEC}, we first state the
requirements that the field generated by a device (source) needs to
satisfy in order to cloak objects inside a predetermined region. Here we
denote by $B(\Bx,r)\subset \real^2$ the open ball of radius $r>0$
centered at $\Bx\in\real^2$.

Let $B(\Bc,a)$ with $a>0$ and $\Bc \in \real^2$ be the region where we
want to hide objects (the cloaked region). The cloaking device is an
active source (antenna) located (for simplicity) inside $B(0,\delta)$ with
$\delta \ll 1$. Assuming a priori knowledge of the incident (probing)
potential $u_0$, we say that the device is an active exterior cloak for
the region $B(\Bc,a)$ if the device generates a potential $u$ such that
 \begin{enumerate}[i.]
  \item The total potential $u+u_0$ is very small in the cloaked region
 $B(\Bc,a)$.
  \item The device potential $u$ is very small outside $B(0,R)$, for
  some large $R>0$.
 \end{enumerate}

Therefore, if the incoming (probing) field is known in advance, an
active exterior cloak hides both itself and any (non-resonant)
object placed in the region $B(\Bc,a)$. Indeed any object inside
$B(\Bc,a)$ only interacts with very small fields and the device field is
very small far away from the device.

After a suitable rotation of axes, we may assume, without loss of
generality, that $\Bc=(p,0)$ with $p>0$.  As in \cite{Vasquez:2009:AEC},
we require the following conditions in our cloak design, 
\begin{equation}
\begin{aligned}
p>a+\delta, & ~~\text{the active device is outside the region $B(\Bc,a)$, and}\\
R>a+p, & ~~\text{the cloaking effect is observed in the far field.}
\end{aligned}
\label{eq:req}
\end{equation}

\subsection{The conductivity equation}
Next, in the spirit of \cite{Vasquez:2009:AEC,Vasquez:2011:MAA} we give a more rigorous
formulation of the exterior cloaking problem for the two-dimensional
conductivity equation and prove its feasibility.  The results extend
easily to the quasistatic regime.

\begin{figure}
 \begin{center}
  \includegraphics[width=\textwidth]{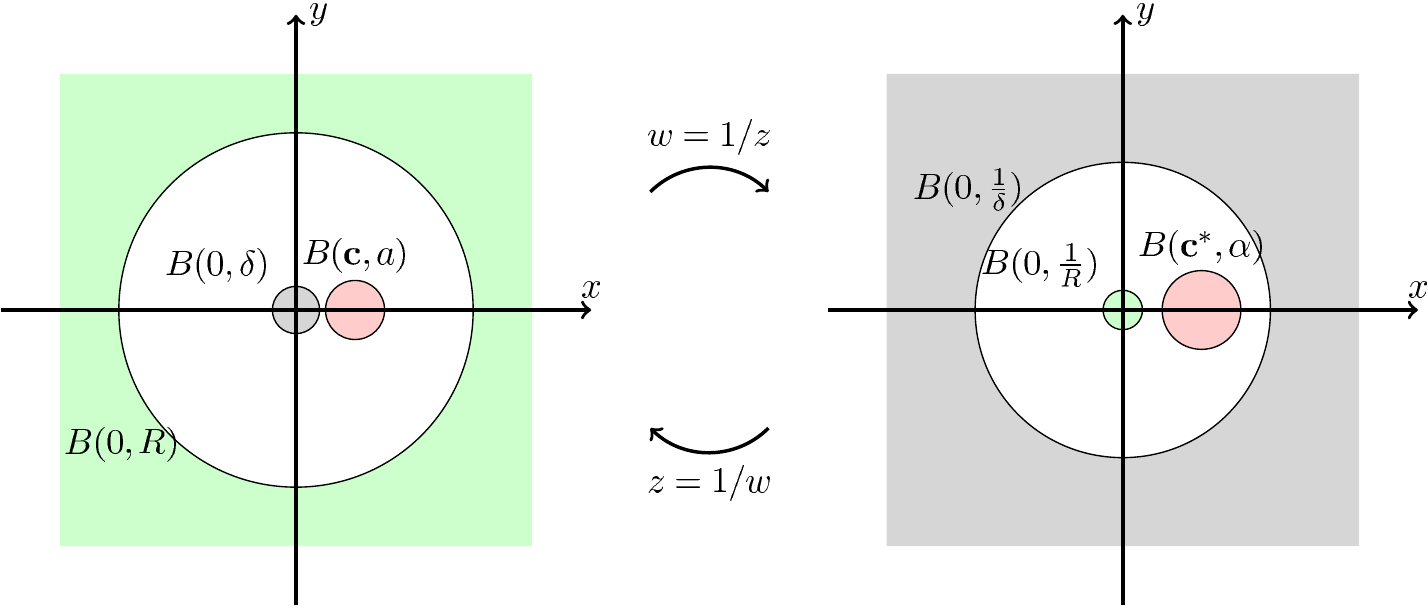}
 \end{center}
 \caption{The regions of Theorem~\ref{thm-1} (left) and their transforms
 under the Kelvin or inversion transformation (right). The device field
 is harmonic everywhere except in the gray areas, is close to minus the
 incident field in the red region and is close to zero in the green
 region.}
 \label{fig:kelvin}
\end{figure}

\begin{theorem}
 \label{thm-1}
 Let $a$, $\Bc$, $R$ and $\delta$ satisfy \eqref{eq:req}, then for any
 $\epsilon>0$ and any harmonic
 potential $u_0$, there exists a function $g_0:\real^2 \to \real$ and a
 potential $u:\real^2 \to \real$, satisfying
 \begin{equation}
  \left\{
  \begin{aligned}
   \varDelta u  &=  0, ~\text{in $\real^2\setminus \overline{
   B(0,\delta)}$,}\\
   u &= g_0,  ~\text{on $\partial B(0,\delta)$,}\\
   |u| &< \epsilon ~\text{in $\real^2\setminus B(0,R)$,}\\
   |u+u_0|&<\epsilon ~\text{in $\overline{B(\Bc,a)}$.} 
  \end{aligned}
  \right.
  \label{eq:g0}
 \end{equation}
\end{theorem}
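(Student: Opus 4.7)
The plan is to reformulate the problem in complex-analytic language and then invoke Runge's theorem on the Riemann sphere. Identifying $\real^2$ with $\complex$ in the usual way, the hypothesis that $u_0$ is harmonic means $u_0 = \Real f_0$ for some entire holomorphic function $f_0$. I will construct a rational function $h(z)$ whose only pole lies at the origin; its real part will be the desired cloak potential $u$.

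Set $K := \overline{B(\Bc,a)} \cup \{z \in \widehat{\complex} : |z| \geq R\}$, viewed as a compact subset of the Riemann sphere $\widehat{\complex}$. The inequalities $p > a + \delta$ and $R > a + p$ from \eqref{eq:req} ensure that the two pieces of $K$ are disjoint and that $\overline{B(0,\delta)}$ lies in $\widehat{\complex} \setminus K$. Consequently the complement $\widehat{\complex} \setminus K = B(0,R) \setminus \overline{B(\Bc,a)}$ is open, connected, and contains the origin.

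Define $f$ on an open neighborhood of $K$ by the piecewise formula $f \equiv -f_0$ near $\overline{B(\Bc,a)}$ and $f \equiv 0$ on a neighborhood of the other piece (so $f(\infty) = 0$). Since the pieces of $K$ are disjoint, $f$ is holomorphic on a neighborhood of $K$. Runge's theorem on $\widehat{\complex}$ then produces, for any $\eta > 0$, a rational function $h(z)$ whose only pole is at the origin (a point in the single connected component of $\widehat{\complex} \setminus K$) with $\sup_{z \in K} |f(z) - h(z)| < \eta$.

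Finally, set $u := \Real h$ on $\real^2 \setminus \{0\}$ (extended arbitrarily at the origin) and $g_0 := u|_{\partial B(0,\delta)}$. Since $h$ is holomorphic on $\complex \setminus \{0\}$, $u$ is harmonic on $\real^2 \setminus \overline{B(0,\delta)}$; the elementary bound $|\Real z| \leq |z|$ then gives $|u + u_0| < \eta$ on $\overline{B(\Bc,a)}$ and $|u| < \eta$ on $\{|z| \geq R\}$, so choosing $\eta = \epsilon$ closes the argument. The step that deserves care is checking the topological hypotheses of Runge's theorem, namely disjointness of the two pieces of $K$ and connectedness of the complement on the sphere: both are immediate from \eqref{eq:req}, and they are precisely the geometric crux of why a single exterior device concentrated near the origin can effect the desired cancellation.
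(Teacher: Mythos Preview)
Your argument is correct. The approach is essentially equivalent to the paper's, but the packaging differs in a way worth noting.

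The paper proceeds by applying the Kelvin inversion $z\mapsto 1/z$ explicitly, which sends the exterior region $\{|z|\ge R\}$ to the disk $\overline{B(0,1/R)}$ and the cloaked ball $\overline{B(\Bc,a)}$ to another disk $\overline{B(\Bc^*,\alpha)}$; it then invokes Walsh's lemma (the harmonic analogue of Runge's theorem) to approximate a function equal to $0$ on one disk and $-q_0$ on the other by a harmonic polynomial. You instead stay in the original coordinates, work on the Riemann sphere $\widehat{\complex}$, and apply Runge's theorem directly with the single pole placed at the origin. These are two faces of the same coin: the Kelvin map is precisely the M\"obius transformation one would use to reduce the Riemann-sphere form of Runge (with $\infty\in K$ and pole at $0$) to the standard planar form (with $K\subset\complex$ and pole at $\infty$), and Walsh's lemma is what one obtains by taking real parts in Runge. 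Your route is slightly more streamlined in that it avoids the intermediate polynomial approximation of $\tU_0$ and the explicit change of variables; the paper's route, on the other hand, sets up the coordinates in which the explicit polynomial construction of Conjecture~\ref{conj} is later carried out, so the inversion is not wasted effort in the larger context.
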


\begin{proof}
By applying the inversion (or Kelvin) transformation $\displaystyle w
\doteq 1/z$, the geometry of problem \eqref{eq:g0} transforms as
follows, 
\begin{itemize}
 \item $\real^2 \setminus B(0,\delta)$ transforms to
 $B(0,1/\delta)$,
 \item $\real^2 \setminus B(0,R)$ transforms to
 $B(0,1/R)$,
 \item $B(\Bc,a)$ transforms to $B(\Bc^*,\alpha)$, with
 \[
  \alpha= \frac{a}{|p^2-a^2|},~~ 
  \Bc^{*}=(\beta,0), ~~\text{and}~~ 
  \beta=\frac{p}{p^2-a^2}.
 \]
\end{itemize}
The different regions and their transforms are illustrated in
\figref{fig:kelvin}.
Thus the problem \eqref{eq:g0} is equivalent to finding $\tg_0$ and $\tu$
such that
 \begin{equation}
  \left\{
  \begin{aligned}
  \varDelta \tu &= 0, ~\text{in $B(0,1/\delta)$},\\
  \tu &= \tg_0, ~\text{on $\partial B(0,1/\delta)$},\\
  |\tu| &< \epsilon, ~\text{in $\overline{B(0,{1/R})}$},\\
  |\tu+\tu_0| &< \epsilon, ~\text{in $\overline{B(\Bc^*,\alpha)}$.}
  \end{aligned}
  \right.
  \label{eq:g0inv}
 \end{equation}
 Relating to the functions $g_0$ and $u_0$ from \eqref{eq:g0}, we get
 $\tg_0(z)=g_0(1/z)$ and $\tu_0(z)=u_0(1/z)$, so that $\tu_0$ is
 harmonic in the whole space except the origin.  Next, we observe that
 the inversion transforms the necessary conditions 
 \eqref{eq:req} to
 \begin{equation}
  \begin{aligned}
  \frac{1}{R}&<\beta-\alpha, ~ \text{the two balls $B(0,1/R)$ and
  $B(\Bc^*,\alpha)$ do not touch},\\
 \beta+\alpha&<\frac{1}{\delta},~\text{the two balls $B(0,1/\delta)$ and
 $B(\Bc^*,\alpha)$ do not touch}.
  \end{aligned}
  \label{eq:reqinv}
 \end{equation} 
Let $\tU_0$ be the analytic extension of $\tu_0$ in $B(\Bc^*,\alpha)$,
obtained with the harmonic conjugate such that $\tu_0$ is the real part
of $\tU_0$. Because of analyticity of $\tU_0$, we can approximate
$\tU_0$ with a polynomial $Q_0$ (e.g. by truncating the series expansion
of $\tU_0$) such that
\begin{equation}
 |\tU_0-Q_0| < \frac{\epsilon}{2} 
 ~\text{in $\overline{B(\Bc^*,\alpha)}$}.
\label{3'''} 
\end{equation}
This immediately yields the approximation for $\tu_0$
\begin{equation}
 |\tu_0-q_0| < \frac{\epsilon}{2} 
 ~\text{in $\overline{B(\Bc^*,\alpha)}$},
 \label{3''} 
\end{equation}
where $q_0\doteq \Re(Q_0)$, i.e., the real part of $Q_0$. Since $\tU_0$ can be
approximated arbitrarily well by a polynomial, it is enough to consider
\eqref{eq:g0inv} when $\tu_0$ is the real part of a polynomial, i.e.
 \begin{equation}
  \left\{
  \begin{aligned}
  \varDelta \tu &= 0, ~\text{in $B(0,1/\delta)$},\\
  \tu &= \tg_0, ~\text{on $\partial B(0,1/\delta)$},\\
  |\tu|&<\epsilon, ~\text{in $\overline{B(0,1/R)}$},\\
  |\tu+q_0|&< \epsilon/2, ~\text{in $\overline{B(\Bc^*,\alpha)})$}.
  \end{aligned}
  \right.
  \label{eq:g0invpoly} 
 \end{equation}
In other words, problem \eqref{eq:g0invpoly} is equivalent to finding a
function $\tu$, harmonic inside $B(0,1/\delta)$ that approximates $q_0$
well inside $B(\Bc^*,\alpha)$ but is practically zero in $B(0,1/R)$.

Let us now recall a classic result in harmonic approximation
theory due to Walsh (see \cite{gardiner:1995:HA}, page 8).

\begin{lemma}[Walsh]
\label{walsh} 
Let $K$ be a compact set in $\real^2$ such that $\real^2\setminus K$ is
connected. Then for each function $w$, harmonic on an open set
containing $K$, and for each $d>0$, there is a harmonic
polynomial $q$ such that $|w-q|<d$ on $K$.
\end{lemma}

Walsh's lemma implies the existence of
a harmonic solution to problem \eqref{eq:g0invpoly}. Indeed, from the
design requirements \eqref{eq:reqinv} there exists $0 < \xi \ll 1$ such that 
\begin{equation}
  \frac{1}{R}+\xi<\beta-\alpha-\xi.
 \label{eq:6}
\end{equation}
Then applying Lemma~\ref{walsh} with $K=\overline{B(0,1/R)} \cup
\overline{B(\Bc^*,\alpha)}$, we obtain that for an arbitrary small
parameter $0<d\ll 1$ and for the function $w$ satisfying 
\begin{equation}
w=\begin{cases}
   0 & \text{ in $B(0,\frac{1}{R}+\xi)$},\\
  -q_0 & \text{ in $B(\Bc^*,\alpha+\xi)$},
  \end{cases}
\end{equation}
there exists a harmonic polynomial $q$ such that $|q-w|<d$ on $K$.  We
conclude that there exists a harmonic solution to problem
\eqref{eq:g0invpoly}, which implies the statement of Theorem~\ref{thm-1}.
\qed
\end{proof}

\subsection{Explicit polynomial solution in the\\* zero frequency regime}
Although mathematically rigorous, the existence result of Theorem \ref{thm-1}
(which follows from Walsh's lemma) does not give an explicit expression for the
required potential at the active device (antenna). In \cite{Vasquez:2009:AEC}
(see also \cite{Vasquez:2011:MAA}) we give a polynomial solution to problem
\eqref{eq:g0inv}. Unfortunately the radius $a$ of the cloaked region in the
polynomial solution of \cite{Vasquez:2009:AEC,Vasquez:2011:MAA} is limited by
the distance from the origin $p$ according to $a < (2+2\sqrt{2})^{-1}p$. Thus
in \cite{Vasquez:2009:AEC,Vasquez:2011:MAA} we can only cloak large objects if
they are sufficiently far from the origin. Here we state a conjecture that
extends our previous results \cite{Vasquez:2009:AEC,Vasquez:2011:MAA} and that
gives more freedom on the choice of the cloaked region location and size. This
is  supported by numerical evidence (see Figs.~\ref{fig:poly} and
\ref{fig:utot}).

\begin{conjecture}
\label{conj}
 Let $\Bc^* = (\beta,0)$ be as in the proof of Theorem~\ref{thm-1}. For
 any $L>0$, any disk $S_1$ in the connected component containing the
 origin of the set
 \begin{equation}
 D_{\beta,L}\doteq\Mcb{
  z\in \complex,~
       |z-\beta|^L |z|<\frac{\beta^{L+1}L^L}{(L+1)^{L+1}}
},
\label{43} 
\end{equation}
any disk $S_2$ in the connected component of the set $D_{\beta,L}$
containing the point $\Bc^*$ and any $\epsilon>0$, there exists two
positive integers $s$ and $n$ such that $ | s/n - L| < \epsilon$ and the
polynomial $P_{n,s}:\complex \to \complex$ defined by 
\begin{equation}
  P_{n,s}(z)=\M{1-\frac{z}{\beta}}^s~\sum_{j=0}^{n-1}\M{\frac{z}{\beta}}^j{s+j-1\choose
j},
\label{44}
\end{equation} 
satisfies 
\begin{equation} 
 |P_{n,s}  - 1| < \epsilon  ~\text{on $\partial S_1$}
 ~~\text{and}~~
 |P_{n,s}| < \epsilon ~\text{ on $\partial
 S_2$.}
 \label{eq:rns}
\end{equation}
Moreover the approximation property \eqref{eq:rns} is not satisfied when
either $S_1$ or $S_2$ is not contained in $D_{\beta,L}$.
\end{conjecture}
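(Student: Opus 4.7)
The plan is to read $P_{n,s}$ as the $n$-th partial sum of the binomial series for $(1-w)^{-s}$, where $w \doteq z/\beta$. Starting from the identity $\sum_{j\geq 0}\binom{s+j-1}{j}w^j = (1-w)^{-s}$ (valid for $|w|<1$) one has
\[
1 - P_{n,s}(z) = (1-w)^s\sum_{j\geq n}\binom{s+j-1}{j}w^j.
\]
Representing the binomial coefficients by Cauchy's formula on a circle $|\zeta|=r$ with $|w|<r<1$, interchanging sum with integral, summing the geometric tail in $w/\zeta$, and writing $h(\zeta) \doteq \zeta(1-\zeta)^{s/n}$ so that $h(w)^n=w^n(1-w)^s$, yields the key identity
\[
1 - P_{n,s}(z) = \frac{1}{2\pi i}\oint_\Gamma\Mb{\frac{h(w)}{h(\zeta)}}^n\frac{d\zeta}{\zeta-w},
\]
for any contour $\Gamma$ enclosing $\{0,w\}$ but not $1$. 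Because $s$ and $n$ are integers the integrand is rational with poles only at $0$, $1$, $w$, so $\Gamma$ may be deformed freely in $\complex\setminus\{0,1,w\}$. Picking up the residue $1$ at $\zeta=w$ gives the companion formula
\[
P_{n,s}(z) = -\frac{1}{2\pi i}\oint_{\Gamma_0}\Mb{\frac{h(w)}{h(\zeta)}}^n\frac{d\zeta}{\zeta-w}
\]
for any contour $\Gamma_0$ enclosing $0$ but not $w$.

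Next, the geometry of $D_{\beta,L}$ comes from the saddle structure of $h_L(\zeta) \doteq \zeta(1-\zeta)^L$. Its only critical point is $\zeta_\ast = 1/(L+1)$, with critical value $|h_L(\zeta_\ast)| = L^L/(L+1)^{L+1}$. In $w$-coordinates $|h_L(w)| = |w|\abs{1-w}^L$, so $D_{\beta,L}$ is exactly $\{w : |h_L(w)|<|h_L(\zeta_\ast)|\}$. The level set $\{|h_L|=|h_L(\zeta_\ast)|\}$ is a figure-eight through $\zeta_\ast$ whose lobes enclose $0$ and $1$; for $c<|h_L(\zeta_\ast)|$ the sublevel set $\{|h_L|\leq c\}$ splits into two bounded components $\CD_0(c)\ni 0$ and $\CD_1(c)\ni 1$, whose limits as $c\uparrow|h_L(\zeta_\ast)|$ give the two components named in the conjecture.

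The positive direction then reduces to contour selection. For $S_1$ in the component through the origin, pick $c$ with $\sup_{S_1}|h_L| < c < |h_L(\zeta_\ast)|$ and take $\Gamma$ to be a level curve $\{|h_L|=c\}$ enclosing both $0$ and $S_1$ but not $1$. On $\Gamma$ one has $|h_L(w)/h_L(\zeta)|\leq \sup_{S_1}|h_L|/c < 1$, so the first representation gives $|1-P_{n,s}|\leq C\,\Mb{\sup_{S_1}|h_L|/c}^n$ on $\partial S_1$. The region $S_2$ in the component around $\Bc^\ast$ is handled identically using the second representation and a contour around $0$ only. Since the strict inequalities involved are open conditions in the exponent, the same estimates persist for $h_{s/n}$ whenever $|s/n-L|$ is sufficiently small; one then chooses $n$ large (with $s$ the nearest integer to $Ln$) so that both $|s/n-L|$ and the geometric factor $\lambda^n$ fall below $\epsilon$.

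The main obstacle is the sharpness assertion. Here one must show that if $S_1$ contains a point $w_0$ with $|h_L(w_0)|>|h_L(\zeta_\ast)|$, then no admissible $(n,s)$ makes $|P_{n,s}-1|$ uniformly small on $\partial S_1$. The key topological observation is a minimax: any contour in $\complex\setminus\{1\}$ enclosing $0$ and $w_0$ must meet $\{|h_L|\leq|h_L(\zeta_\ast)|\}$ (otherwise $0$ and $1$ would lie in the same component of $\{|h_L|<c\}$ for some $c>|h_L(\zeta_\ast)|$, obstructing a contour separating them), and the supremum over contours of $\min|h_L|$ equals $|h_L(\zeta_\ast)|$, attained by a contour threading the saddle. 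A steepest-descent expansion at $\zeta_\ast$ then yields
\[
1-P_{n,s}(w_0)\sim Cn^{-1/2}\Mb{\frac{h_L(w_0)}{h_L(\zeta_\ast)}}^n,
\]
(up to an oscillating phase) which diverges exponentially since the ratio of moduli exceeds one. An analogous argument using the second representation treats $S_2$. The delicate technical points are (i) checking that the steepest-descent contour remains admissible when the polynomial exponent $s/n$ replaces $L$, and (ii) verifying stability of the saddle asymptotics under that perturbation, both of which follow from continuity of $h_L$, $\zeta_\ast$ and $|h_L(\zeta_\ast)|$ in $L$.
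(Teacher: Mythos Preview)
The paper does not prove this statement; it is explicitly labeled a \emph{conjecture}, supported only by numerical evidence (Figs.~4 and~5) and by two heuristic remarks: that $P_{n,s}$ is the Hermite interpolant with a zero of order $n$ for $P_{n,s}-1$ at $0$ and a zero of order $s$ for $P_{n,s}$ at $\beta$, and that the last term in the sum defining $P_{n,s}$ diverges outside $D_{\beta,L}$ by Stirling's formula. The paper notes that only the symmetric case $L=1$ has been proved, in \cite{Vasquez:2011:MAA}. Your proposal therefore goes well beyond what the paper offers.

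Your existence argument is sound. The residue verification of the identity
\[
1-P_{n,s}(z)=\frac{1}{2\pi i}\oint_\Gamma\frac{w^n(1-w)^s}{\zeta^n(1-\zeta)^s}\,\frac{d\zeta}{\zeta-w}
\]
for any $\Gamma$ enclosing $0$ and $w$ but not $1$ is correct, as is the companion formula for $P_{n,s}$ after extracting the residue at $\zeta=w$. Taking $\Gamma$ to be a level curve of $|h_L|$ in the appropriate lobe gives a geometric decay rate strictly less than one, and continuity of $h_{s/n}$ in the exponent lets you transfer this to nearby rational $s/n$. This part constitutes a genuine proof of the positive half of the conjecture.

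The sharpness direction is where real gaps remain. First, your parenthetical justification of the minimax claim is not quite the right argument; the correct reason that any contour enclosing $0$ but not $1$ must meet the closed figure-eight $\{|h_L|\le|h_L(\zeta_\ast)|\}$ is that the figure-eight is \emph{connected} (the two lobes share the saddle point), so a Jordan curve in its complement containing $0$ in its interior must contain the whole figure-eight and hence $1$. Second, and more substantively, the minimax observation by itself does not yield a lower bound on the integral, since the contributions along the contour could cancel. What you actually need is the steepest-descent computation on a \emph{specific} contour through $\zeta_\ast$, together with a verification that this contour (or a completion of it) is homologous to one enclosing $0$ and $w_0$ but not $1$; the homology check depends on the global structure of the constant-phase curves of $h_{s/n}$ and is not addressed. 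Third, the conjecture's sharpness claim includes the borderline case where $S_1$ merely touches $\partial D_{\beta,L}$; there $|h_L(w_0)|=|h_L(\zeta_\ast)|$ and your asymptotic gives $|1-P_{n,s}(w_0)|\sim Cn^{-1/2}$, which \emph{does} tend to zero, so a finer analysis is required to decide whether the approximation property fails in that case. These are the points that would have to be settled to turn the sketch into a proof.
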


\begin{remark}
To see why we expect that the polynomial $P_{n,s}$ satisfies
\eqref{eq:rns}, notice that $z=\beta$ is a root of multiplicity $s$ of
the polynomial $P_{n,s}$. From the Taylor expansion of $P_{n,s}$ around
$z=\beta$, we can expect that $P_{n,s} \approx 0$ in a sufficiently
small disk around $\beta$. (The symbol $\approx$ denotes an
approximation with respect to the supremum norm.)  Now the function
\[
 g(z) = \Mb{\sum_{j=0}^{n-1} \M{\frac{z}{\beta}}^j
 {s+j-1\choose j}} - \M{ 1 - \frac{z}{\beta}}^{-s},
\]
has a root of multiplicity $n$ at $z=0$, i.e.  $g^{(k)}(0) = 0$ for
$k=0,\ldots,n-1$. This is because the sum in the definition of $g(z)$
corresponds to the first $n$ terms in the Taylor expansion around $z=0$
of $(1-z/\beta)^{-s}$. Thus by Leibniz rule, $z=0$ is a root of
multiplicity $n$ of the polynomial $P_{n,s} - 1 = (1-z/\beta)^s g(z)$,
and we can expect $P_{n,s} \approx 1$ in a sufficiently small disk
around the origin. This suggests an alternative definition of $P_{n,s}$
as the unique Hermite interpolation polynomial (see e.g.
\cite{Stoer:2002:INA}) satisfying:
\[
 \begin{aligned}
 P_{n,s}(0) &= 1,\\
 P_{n,s}^{(k)}(0) & = 0, ~~\text{for $k=1,\ldots,n-1$},\\
 P_{n,s}^{(k)}(\beta) &=0, ~~\text{for $k=0,\ldots,s-1$}.
 \end{aligned}
\]
\end{remark}

\begin{remark}
To motivate our belief that the region $D_{\beta,L}$ is the region of
convergence of $P_{n,s}$ as $n\to \infty$ and $s\to \infty$ with $s/n \to
L$, consider the special case where $L>0$ is an integer and $s=nL$. Then
the last term in the sum \eqref{44} defining $P_{n,nL}(z)$ is
\[
 \M{1-\frac{z}{\beta}}^{nL} \M{\frac{z}{\beta}}^{n-1} { n(L+1) -1
 \choose n-1},
\]
which diverges outside of $D_{\beta,L}$ as $n\to \infty$ because 
\[
 { n(L+1) -1 \choose n-1}^{\frac{1}{n-1}} \to \frac{(L+1)^{L+1}}{L^L},
 ~ \text{as $n\to \infty$},
\]
which follows from Stirling's formula, see e.g. \cite[\S
5.11]{olver:2010:NHM}.
Therefore the region of divergence of $P_{n,Ln}$ contains the complement
of $D_{\beta,L}$.
\end{remark}

For some polynomial $Q_0$ we can deduce from  \eqref{eq:rns} that
\begin{equation} 
 Q_0P_{n,s}-Q_0 \approx 0 ~\text{on $\partial S_1$}
 ~~\text{and}~~
 Q_0P_{n,s}-Q_0\approx -Q_0 ~\text{ on $\partial
 S_2$,}
 \label{45}
\end{equation}
Thus, by construction, the real part of $W=Q_0P_{n,s}-Q_0$ is a solution
of \eqref{eq:g0invpoly} with $S_1 = B(0,1/R)$ and $S_2=
B(\Bc^*,\alpha)$.  In the particular case when $n=s$ (i.e.  $L=1$),
Conjecture~\ref{conj} was proved in
\cite{Vasquez:2011:MAA}, see also \cite{Vasquez:2009:AEC}.

\begin{figure}
\begin{center}
\begin{tabular}{cc}
\includegraphics[width=0.45\textwidth]{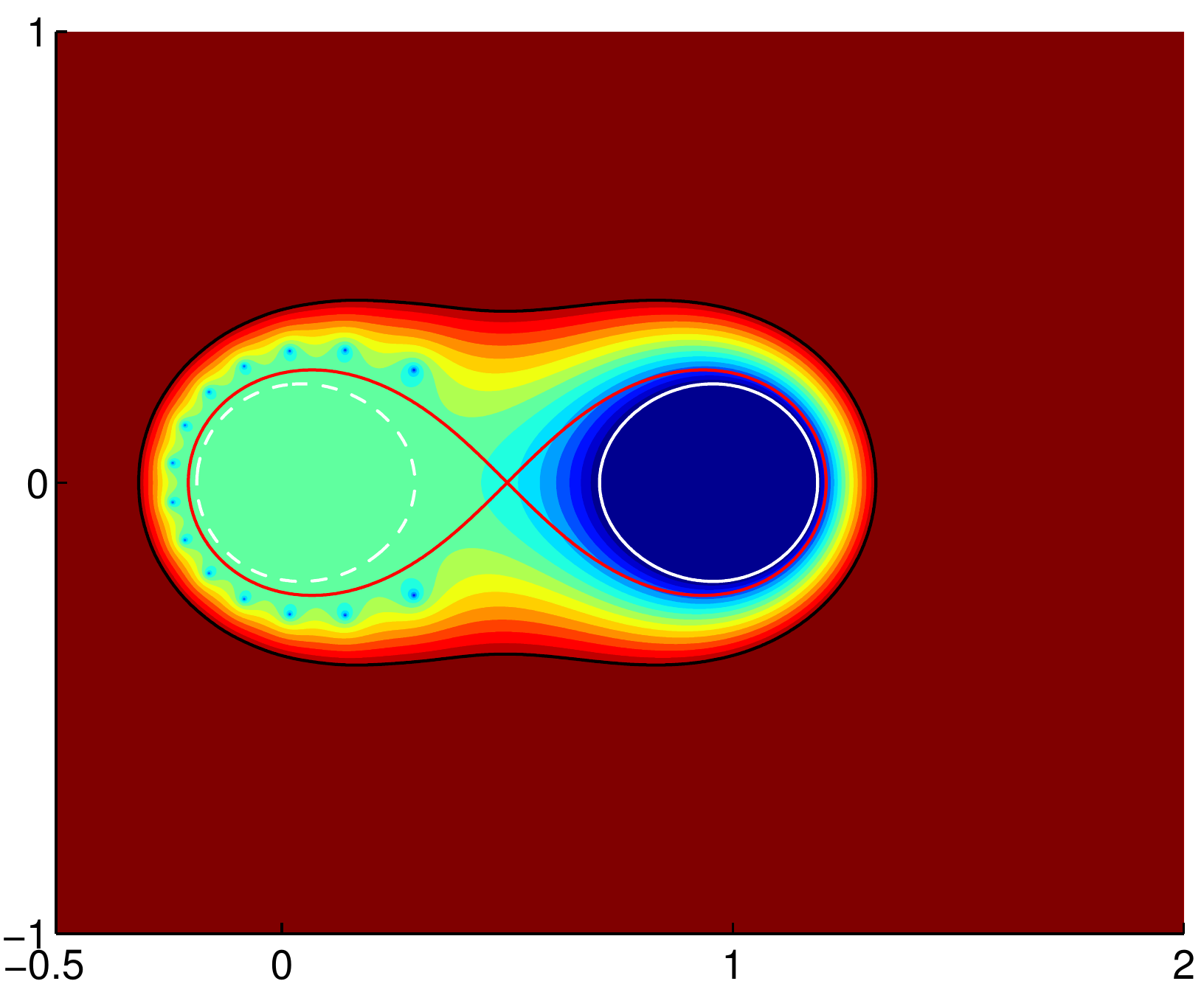} &
\includegraphics[width=0.45\textwidth]{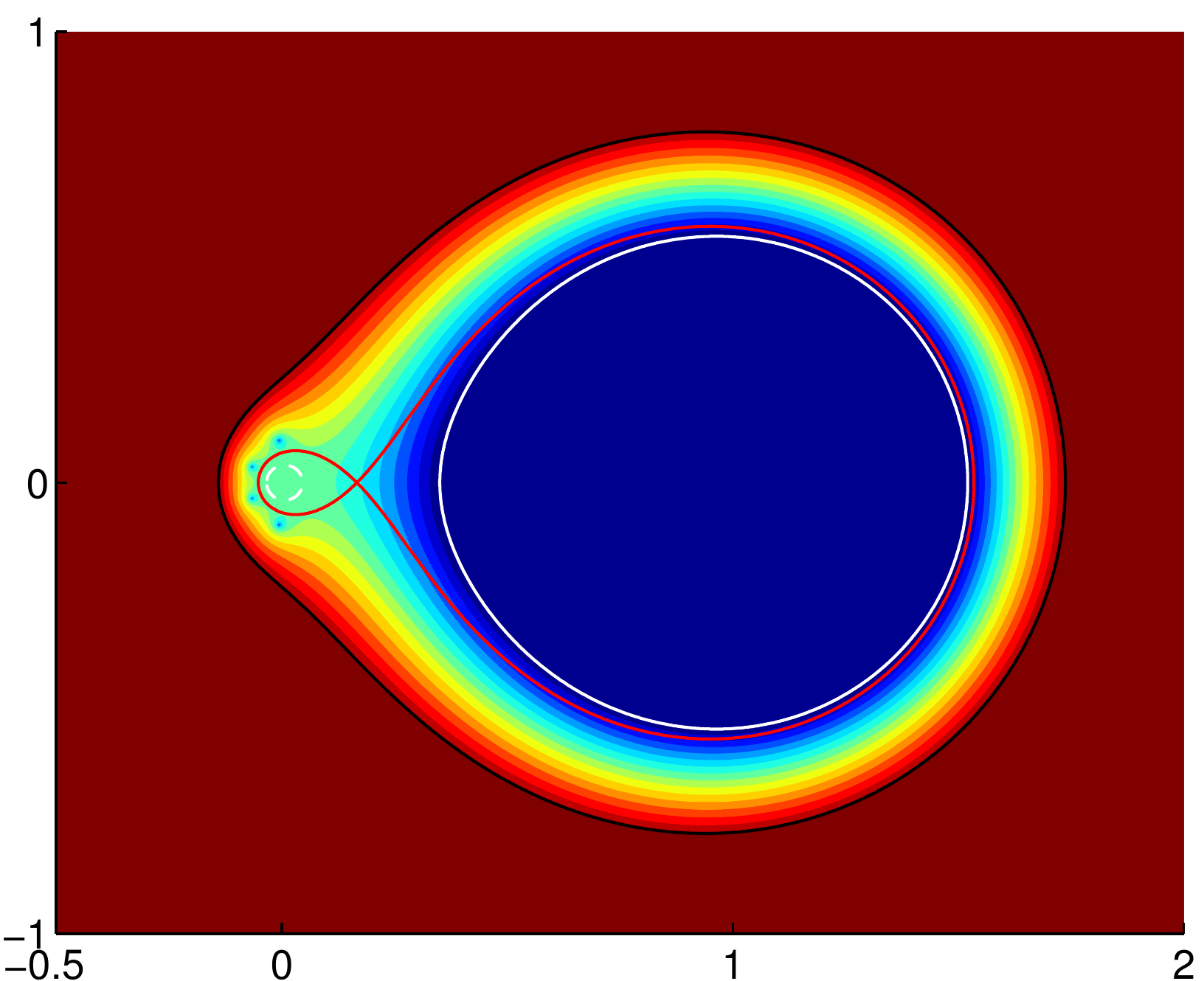}\\
(a) $n=s=15$ & (b) $n=5$, $s=25$
\end{tabular}
\end{center}
\caption{Contour plot for the polynomial $P_{n,s}$ with $\beta=1$. The solid
white line is the level-set $|P_{n,s}(z)|=10^{-2}$, thus the cloaked
region could be any disk inside this level-set. The dashed white line is
the level-set $|P_{n,s}(z)-1|=10^{-2}$. The device field is small in any
circle inside this level-set. The red curve is the boundary of
$D_{\beta,L}$, the conjectured region of convergence of $P_{n,s}$ as both $n\to
\infty$ and $s\to \infty$ with $s/n=1$ and $5$, respectively. (This is proved in
the case $n=s$ in \cite{Vasquez:2011:MAA}.) The color scale is
logarithmic from 0.01 (dark blue) to 100 (dark red), with light green
representing 1.} 
\label{fig:poly}
\end{figure}

\begin{figure}
\begin{center}
\begin{tabular}{cc}
\includegraphics[width=0.45\textwidth]{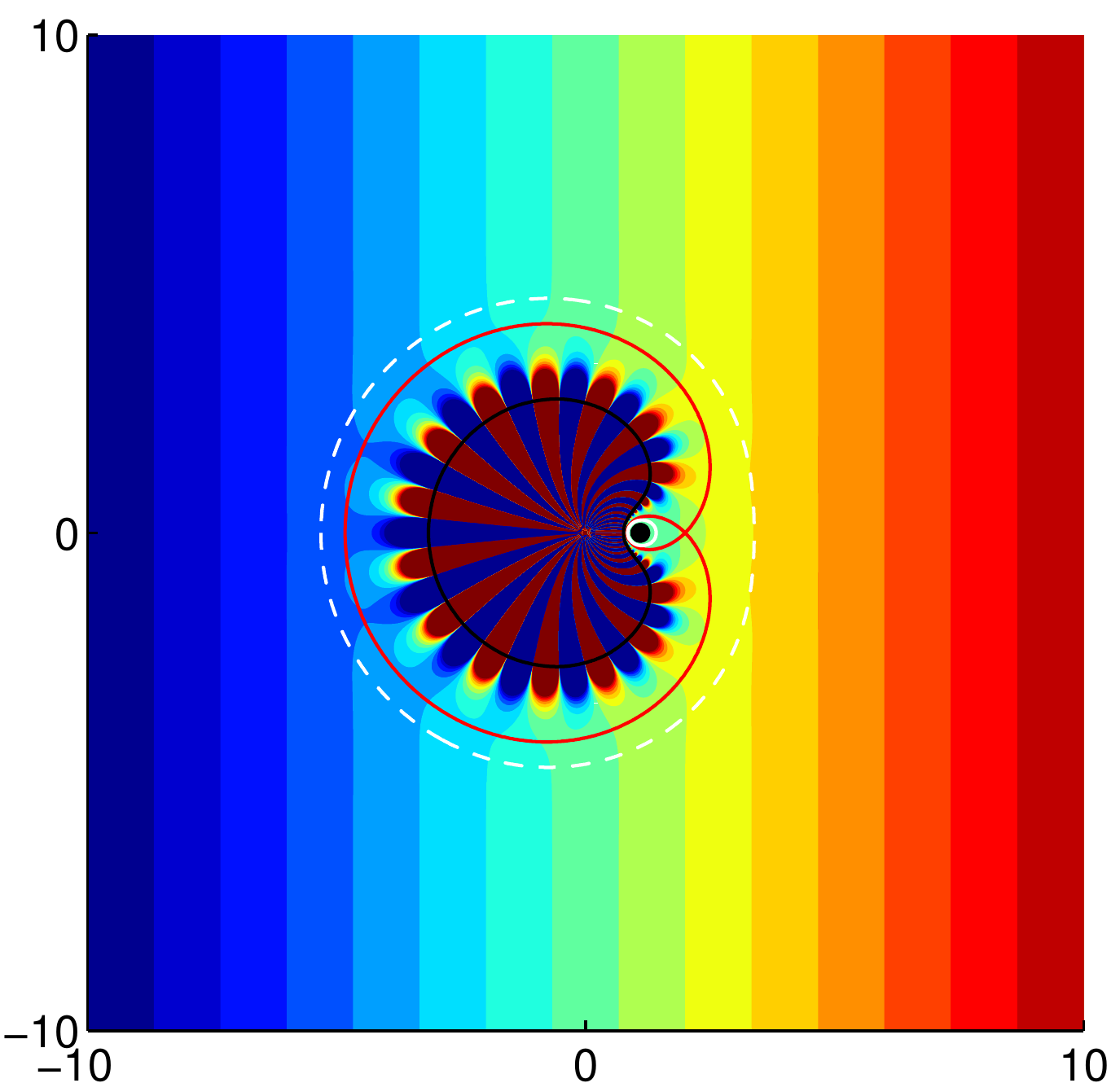} &
\includegraphics[width=0.45\textwidth]{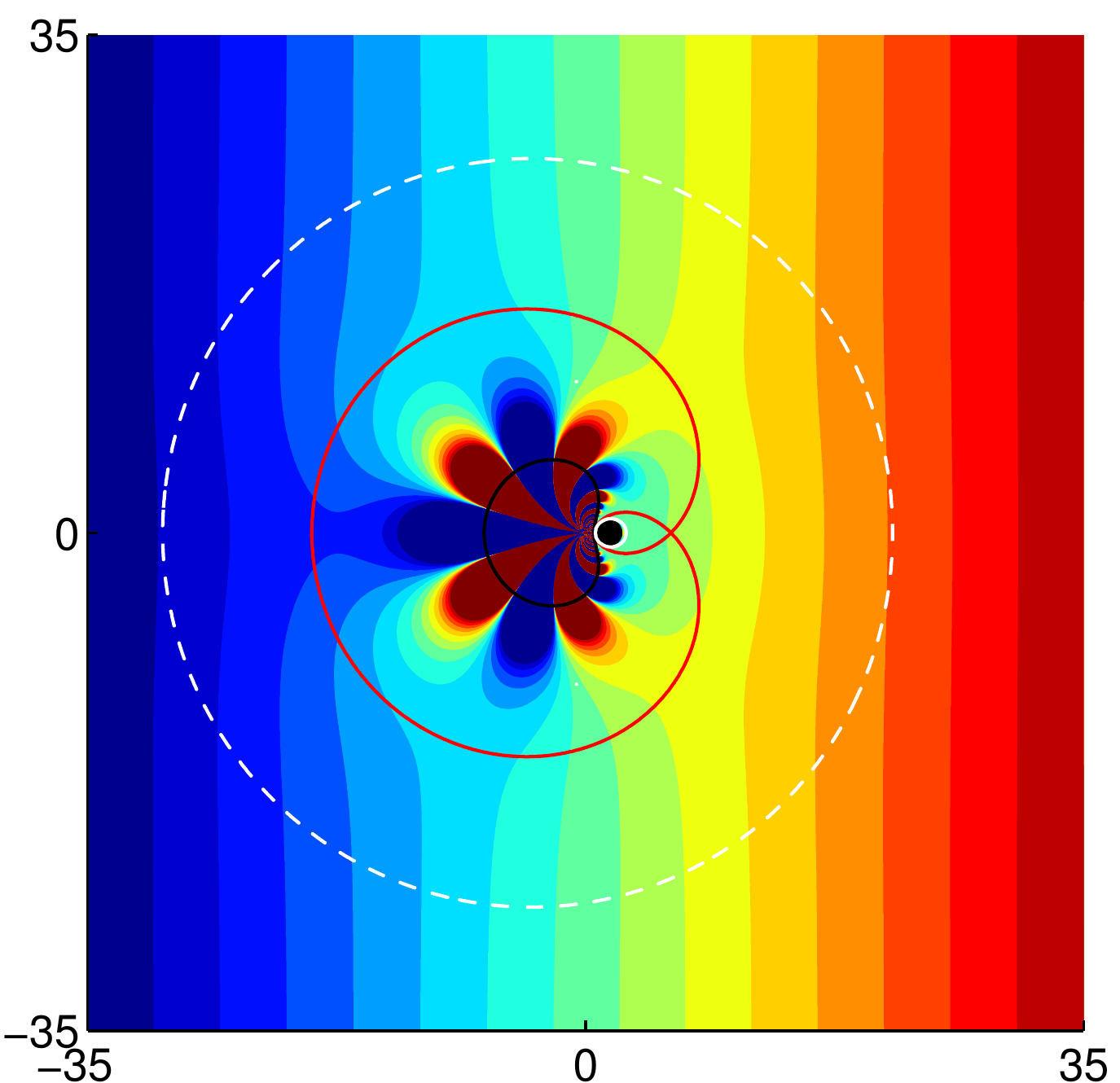}\\
(a) $n=s=15$ & (b) $n=5$, $s=25$
\end{tabular}
\end{center}
\caption{Real part of the total field with the cloaking device active,
incident field $u_0(x,y)=x$ and $\beta=1$. The solid white, dashed white
and red lines are the inversion (Kelvin) transforms of their counterparts in
\figref{fig:poly}. The small black scatterer inside the cloaked region
is an almost resonant disk centered at $(p,0)$ with radius $r$ and
dielectric constant $\epsilon$ given by: (a) $p=1.1$, $r=0.2$ and
$\epsilon=-0.99$; (b) $p=1.7$, $r=0.9$ and $\epsilon=-0.998$. In (a) the color
scale is linear from -10 (dark blue) to 10 (dark red). In (b)
the scale is linear from -35 to 35.}
\label{fig:utot}
\end{figure}

In \figref{fig:poly} we present a contour plot of the polynomial
$P_{n,s}$ when $\beta=1$ for different values of $n$ and $s$. The region
bounded by the peanut shaped red curve represents the conjectured domain
of convergence $D_{\beta,L}$ of the functions $P_{n,s}$ when
$n\rightarrow \infty$, $s\rightarrow \infty$ and $s/n \rightarrow L$.
In the left side of $D_{\beta,L}$, $P_{n,s}$ is conjectured to
converge to one, while in the right side of $D_{\beta,L}$, $P_{n,s}$ is
conjectured to converge to zero. The area within the solid white circle
on the right represents the region to be cloaked and the area within the
dashed white circle in the left represents the location of the observer.
We now present an active cloak design based on Conjecture~\ref{conj}.
\begin{remark} 
Let $u_0$ be an a priori determined incoming harmonic potential.  Let
$n$ and $s$ be such that the real part of $W=Q_0P_{n,s}-Q_0$ is a
solution of \eqref{eq:g0invpoly} (recall $Q_0$ is a polynomial
approximation of $\tU_0$ in $B(\Bc^*,\alpha)$). Let $S$ be
a bounded region in the complex plane compactly including the two disks
$S_1=B(0,{\frac{1}{R}})$ and $S_2=B(\Bc^*,\alpha)$. Then,  the cloaking
strategy we propose consists of an active device (antenna) located
inside $B(0,\delta)$ and capable of generating  a potential equal to the
real part of $W({1\over z})$ on the set $\{z\in \complex, {1\over z}\in
\partial S\}$. By \eqref{3''}, the total potential in the original
physical configuration (the field from the antenna plus $u_0$) is well
approximated by the real part of $(W+Q_0)(1/z)$ which ensures an almost zero
field region in $B(\Bc,a)$ with negligible perturbations on the field
outside $B(0,R)$.
\end{remark}

\figref{fig:utot} illustrates how the cloaking device (represented by
the solid black curve) works after applying the back-inversion to the
configurations presented in \figref{fig:poly}. Here the incident field
is $u_0(x,y)=x$ and the objects we want to hide are almost resonant disks.
Clearly, the active device generates the necessary field to cancel the
field in the cloaked region while having a very small effect
in the far field (outside the white dashed circle). With a polynomial of
the same degree, when $s=5n$ (\figref{fig:utot}(b)) we can hide an
object roughly four times larger than when $s=n$ (\figref{fig:utot}(a)).
Thus using the polynomials $P_{n,s}$ allows us to cloak large objects
without restrictions on the distance from the origin $\beta$ as was the
case in \cite{Vasquez:2009:AEC,Vasquez:2011:MAA}. The disadvantage is
that cloaking is enforced on $\partial B(0,R)$ (dotted white line in
\figref{fig:utot}) with a larger $R$ in the asymmetric $L>1$ case than
in the symmetric $L=1$ case. For example to get a device field such that
$|u|<10^{-2}$, $R$ needs to be roughly five times larger when $s=5n$
(\figref{fig:utot}(b)) than when $s=n$ (\figref{fig:utot}(a)).

\subsection{Extensions and applications}
We now extend the previous results to the case of an incoming field
having sources in $\real^2 \setminus \overline{B(0,R)}$.

\begin{remark}
\label{rem-1}The case studied in Theorem \ref{thm-1} (with an explicit
solution in Conjecture~\ref{conj}) corresponds to an incoming
field $u_0$ generated by a source located at infinity. The more
general case corresponding to an incoming field having sources in $\real^2\setminus
\overline{B(0,R)}$ can be treated similarly. Indeed,  the
problem remains to find $g_0$ and $u$ satisfying \eqref{eq:g0}, or
equivalently $\tg_0$ and $\tu$ satisfying \eqref{eq:g0inv} where inside
$B(0,{1/R})$, $\tu_0(z) = u_0(1/z)$ is harmonic. We can still
approximate its analytic extension $\tU_0$ by a polynomial in
$B(\Bc^*,\alpha)$ and the proof goes as in Theorem~\ref{thm-1}.
\end{remark}

Although our main focus here is cloaking, the same ideas can be applied
to illusion optics, where one wants to conceal an object by imitating
the response (scattering) of a completely different object.

\begin{remark}
\label{rem-2}
Let $u_1$ be the response of an object we wish to imitate, i.e. an
arbitrary potential harmonic in a set $D_1\subset
\real^2$ such that $\real^2\setminus B(0,R) \Subset D_1$. Assuming
the same notations as before, for any (known a priori) probing field $u_0$,
harmonic in $\real^2$, there exists a function $g\in C(\partial
B(0,\delta))$ so that the field $u$ generated by the active device (antenna) located
in $B(0,\delta)$ satisfies:
\begin{enumerate}[i.]
 \item The total field $u+u_0$ is very small in the cloaked region
 $B(\Bc,a)$.
 \item The device field $u$ is close to $u_1$ in $\real^2\setminus
 B(0,R)$.
\end{enumerate}
\end{remark}
Remark~\ref{rem-2} follows from the inversion (Kelvin) transform and
Lemma~\ref{walsh} by using an argument similar to the proof of
Theorem~\ref{thm-1}.  Using ideas similar to those in
Remark~\ref{rem-1}, the result of Remark~\ref{rem-2} can be generalized
to the case of an incoming field with sources in $\real^2\setminus
\overline{B(0,R)}$.

To illustrate Remark~\ref{rem-2} assume that the field $u_1$ is chosen
to be the response field of an inhomogeneity $\CI$ when probed with the
incident field $u_0$. Then Remark~\ref{rem-2} means that when probing
with the field $u_0$, an observer located in the far field detects the
inhomogeneity $\CI$ regardless of the inclusion inside $B(\Bc,a)$ and
without detecting the active illusion device. This creates the illusion that
the object inside $B(\Bc,a)$ is the inhomogeneity $\CI$.

\section{Active exterior cloaking for the Helmholtz equation in\\* three
dimensions} 
\label{sec:helm} 
Previously in \cite{Vasquez:2009:AEC,Vasquez:2009:BEC} we designed
cloaking devices generating fields close to minus the incident field in
the region to be cloaked and vanishing far away from the devices.
Miller \cite{Miller:2007:PC} proposed an active cloak based on Green's
identities: a  single and double layer potential is applied to the
boundary of the cloaked region to cancel out the incident field inside
the cloaked region, while not radiating waves. The idea of using Green's
identities to cancel out waves in a region is well known in acoustics
(see e.g.
\cite{Ffowcs:1984:RLA,Malyuzhinets:1964:TAF,Jessel:1972:ASA}). Jessel
and Mangiante \cite{Jessel:1972:ASA} showed that it is possible to
achieve a similar effect to Green's identities (and thus cloaking) by
replacing the single and double layer potentials on a surface by a
source distribution in a neighborhood of the surface. What makes our
approach different is that the cloaking devices are multipolar sources
{\em exterior} to the cloaked region and thus do not completely enclose
the cloaked region. In \cite{Vasquez:2009:AEC,Vasquez:2009:BEC} the
cloaking devices are determined by solving numerically a least-squares
problem with linear constraints.  Our cloaking approach easily
generalizes to several frequencies \cite{Vasquez:2009:BEC} but requires
a priori knowledge of the incident field. Zheng, Xiao, Lai and Chan
\cite{Zheng:2010:EOC} used the same principle to achieve illusion optics
\cite{Lai:2009:IOO} with active devices, i.e. making an object appear as
another one.  Then in \cite{Vasquez:2011:ECA} we showed Green's identity
can be used to design devices which can cloak or give the illusion of
another object, i.e. achieving an effect similar to the active devices in
\cite{Vasquez:2009:AEC,Vasquez:2009:BEC,Zheng:2010:EOC}. The single and
double layer potential needed to reproduce a smooth field inside a
region while being zero outside is given by Green's identity and can be
replaced by a few multipolar sources using addition formulas for
spherical outgoing waves.  If in addition we want to imitate the
scattered field from an object as in \cite{Zheng:2010:EOC}, a similar
procedure applies.

The active cloaking devices we designed in
\cite{Vasquez:2009:AEC,Vasquez:2009:BEC,Vasquez:2011:ECA} are two
dimensional. Here we extend the result in \cite{Vasquez:2011:ECA} to the
Helmholtz equation in three dimensions. The wave pressure field
$u(\Bx)$ solves the Helmholtz equation,
\[
 \varDelta u + k^2 u = 0, ~~ \text{for $\Bx\in\real^3$},
 \label{eq:helm}
\]
where $k = 2\pi/\lambda$ is the wavenumber, $\lambda = 2\pi c / \omega$
is the wavelength, $c$ is the wave propagation speed (assumed to be
constant) and $\omega$ is the angular frequency. Recall for
future reference that the radiating Green's function for the Helmholtz
equation in three dimensions is 
\begin{equation}
 G(\Bx,\By) =  \frac{\exp[ik\abs{\Bx-\By}]}{4\pi\abs{\Bx-\By}}
 \label{eq:gf}
\end{equation}
Another underlying assumption is that the frequency $\omega$ is not a
resonant frequency of the scatterer we wish to hide.

\subsection{Green's formula cloak}
As pointed out by Miller \cite{Miller:2007:PC} it is possible to cloak an object inside a
bounded region $D \in \real^3$ from an incident wave (probing field)
$u_i$ by generating a cloaking device field using monopole and dipole
sources (single and double layer potential) on $\partial D$. The device
field $u_d$ can be defined using Green's formula
\begin{equation}
\begin{aligned}
u_d(\Bx) & = \int_{\partial D} \D S_\By \Mcb { - (\Bn(\By) \cdot \nabla_\By
u_i(\By))
G(\Bx,\By) + u_i(\By) \Bn(\By) \cdot \nabla_\By G(\Bx,\By) }\\
 &= 
 \begin{cases}
  -u_i(\Bx), & \text{if $\Bx \in D$}\\
  0, & \text{otherwise},
 \end{cases}
\end{aligned}
\label{eq:green}
\end{equation}
so that the total field $u_i + u_d$ is a solution to Helmholtz equation
for $\Bx \notin \partial D$  that vanishes inside $D$ while being
indistinguishable form $u_i$ outside $D$. Since the waves reaching a
scatterer inside the cloaked region $D$ are practically zero, the
resulting scattered field is also practically zero. For 
clarity we assume the region $D$ is a polyhedron. The arguments
we give here can be easily modified for other domains with Lipschitz
boundary, as Green's identity \eqref{eq:green} is valid for these
domains \cite{evans:1992:MTF}.

\begin{remark}
 The Green representation formula \eqref{eq:green} requires that $u_i$
 be a $C^2$ solution to the Helmholtz equation inside $D$. A similar
 identity holds when $u_i$ is a $C^2$ radiating solution to the
 Helmholtz equation {\em outside} $D$. In this case, the device field
 $u_d$ vanishes inside $D$ and is identical to $-u_i$ outside $D$. The
 exterior cloak we present here can in principle be used to conceal
 a known active source and possibly accompanying scatterers inside $D$.
 If the radiating wave $u_i$ is taken to be the scattered field from a
 known object, the same principle can be used for illusion optics
 \cite{Lai:2009:IOO,Zheng:2010:EOC}.
\end{remark}

\subsection{Active exterior cloak}
The main idea here is to achieve a similar effect to the Green's
identity cloak but without completely surrounding the cloaked region by
monopoles and dipoles on $\partial D$. We ``open the cloak'' by
replacing the single and double layer potential on each face $\partial
D_l$ of $\partial D$ by a corresponding multipolar device located at
some point $\Bx_l$. Each device produces a linear combination of
outgoing spherical waves of the form
\begin{equation}
 u_d(\Bx) = 
 \sum_{l=1}^{n_{dev}} \sum_{n=0}^\infty \sum_{m=-n}^n b_{l,n,m}
 V_n^m(\Bx-\Bx_l),
 \label{eq:udev}
\end{equation}
where $n_{dev}$ is the number of devices (or faces of $\partial D$) and
$V_n^m(\Bx)$ is a radiating, spherical wave defined for $\Bx \neq 0$ by 
\[
 V_n^m(\Bx) = h_n^{(1)}(k|\Bx|) Y_n^m(\hat{\Bx}).
\]
Here $h_n^{(1)}(t)$ is a spherical Hankel function of the first kind
(see e.g. \cite[\S 10.47]{olver:2010:NHM}) and $Y_n^m(\hat{\Bx})$ is a spherical
harmonic evaluated at the point $\hat{\Bx} \equiv \Bx/\abs{\Bx}$ of the
unit sphere $S(0,1)$. In spherical coordinates, the spherical harmonics
we use are defined as in \cite[\S 2.3]{colton:1998:iae} by
\begin{equation}
 Y_n^m(\theta,\phi) = \sqrt{\frac{2n+1}{4\pi}
 \frac{(n-\abs{m})!}{(n+\abs{m})!}} P_n^{\abs{m}}(\cos \theta) e^{im\phi},
 \label{eq:ynm}
\end{equation}
where the elevation angle is $\theta \in [0,\pi]$ and the azimuth angle
is $\phi \in [0,2\pi]$. Here $P_n^{\abs{m}}(t)$ are the associated Legendre
functions 
\[
 P_n^m(t) = (1-t^2)^{m/2} \frac{d^m P_n(t)}{dt^m}, 
\]
defined for $n=0,1,2,\ldots$ and $m=0,1,\ldots,n$ in terms of the
Legendre polynomials $P_n$ of degree $n$ with normalization $P_n(1)=1$.
The definition \eqref{eq:ynm} ensures that the spherical harmonics
$Y_n^m$ have unit $L^2(S(0,1))$ norm.

The main tool to replace the fields generated by a face is the addition 
formula (see e.g. Theorem 2.10 in \cite{colton:1998:iae})
\begin{equation}
 G(\Bx,\By) = ik \sum_{n=0}^\infty \sum_{m=-n}^n V_n^m(\Bx)
 \conj{U_n^m(\By)}
 \label{eq:mpe}
\end{equation}
which means we can mimic a point source located at $\By$ by a multipolar
source located at the origin. The coefficients in the multipolar
expansion are values of entire spherical waves
\[
 U_n^m(\Bx) = j_n(k|\Bx|) Y_n^m(\hat{\Bx}),
\]
where $j_n(t)$ are spherical Bessel functions \cite[\S 10.47]{olver:2010:NHM}. The
series in the multipolar expansion \eqref{eq:mpe} converges uniformly on
compact sets of $|\Bx| > |\By|$.

We are now ready to state the main result of this section.
\begin{theorem}
 \label{thm:conv}
 Multipolar sources located at the points $\Bx_l \notin \partial D$,
 $l=1,\ldots,n_{dev}$ can be used to reproduce the Green's formula cloak
 outside of the region
 \[
  A = \bigcup_{l=1}^{n_{dev}} B\M{\Bx_l, \sup_{\By \in \partial D_l} |\By -
  \Bx_l| },
 \]
 where $B(\Bx,r)$ is the closed ball of radius $r$ centered at $\Bx$.
 The coefficients $b_{l,n,m}$ in \eqref{eq:udev} such that
 $u_d^{(ext)}(\Bx)  = u_d(\Bx)$ for $\Bx \notin A$ are
 \begin{equation}
  \begin{aligned}
  b_{l,n,m} = ik \int_{\partial D_l} \D S_\By 
  \Bigl\lbrace&
  (-\Bn(\By)\cdot \nabla_\By u_i(\By)) \conj{U_n^m(\By - \Bx_l)}\\
  &+ u_i(\By) \Bn(\By) \cdot \nabla_\By \conj{U_n^m(\By - \Bx_l)}
  \Bigr\rbrace.
  \end{aligned}
  \label{eq:blnm}
 \end{equation}
 Moreover the convergence of \eqref{eq:udev} is uniform on compact sets
 outside $A$.
\end{theorem}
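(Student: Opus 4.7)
The plan is to start from Green's formula \eqref{eq:green}, split the surface integral over $\partial D$ into a sum of integrals over the faces $\partial D_l$, expand $G(\Bx,\By)$ and $\nabla_\By G(\Bx,\By)$ on each face into a multipolar series centered at the corresponding device point $\Bx_l$, and then interchange summation and integration to read off the coefficients in the ansatz \eqref{eq:udev}. Using translation invariance $G(\Bx,\By)=G(\Bx-\Bx_l,\By-\Bx_l)$ and applying the addition formula \eqref{eq:mpe} to the shifted arguments I get
\[
G(\Bx,\By) = ik \sum_{n=0}^{\infty}\sum_{m=-n}^{n} V_n^m(\Bx-\Bx_l)\,\conj{U_n^m(\By-\Bx_l)},
\]
valid whenever $|\Bx-\Bx_l|>|\By-\Bx_l|$. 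Because $U_n^m$ is entire in $\By$, differentiating term by term yields an analogous expansion for $\nabla_\By G(\Bx,\By)$ with $\conj{U_n^m(\By-\Bx_l)}$ replaced by $\nabla_\By\conj{U_n^m(\By-\Bx_l)}$.

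Setting $r_l \doteq \sup_{\By\in\partial D_l}|\By-\Bx_l|$, for any fixed $\Bx$ with $|\Bx-\Bx_l|>r_l$ the above expansions converge uniformly in $\By\in\partial D_l$. I can therefore swap the $(n,m)$-sum with the face integral, which produces precisely the representation \eqref{eq:udev} with coefficients matching \eqref{eq:blnm}. Since Green's formula gives $u_d = -u_i$ in $D$ and $u_d=0$ in $\real^3\setminus\overline{D}$, this shows that the multipolar series reproduces the Green's formula cloak at every point outside $A = \bigcup_l B(\Bx_l,r_l)$. To upgrade pointwise identity to uniform convergence of \eqref{eq:udev} on a compact set $K\Subset\real^3\setminus A$, I choose for each $l$ a radius $r_l'>r_l$ small enough that $K$ remains outside $B(\Bx_l,r_l')$; uniform convergence of the addition formula on the product of the compact sets $\{|\By-\Bx_l|\le r_l\}$ and $\{|\Bx-\Bx_l|\ge r_l'\}$ then transfers to the face integral, and summing over the finitely many devices preserves the uniform bound.

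The main technical obstacle is the rigorous justification of these two term-by-term operations: differentiating the Colton--Kress addition formula in $\By$, and interchanging the $\partial D_l$ integral with the double sum over $(n,m)$. Both reduce to a single quantitative statement — locally uniform convergence of \eqref{eq:mpe} on a strict product of an inner and an outer compact set separated by a positive radial gap — which follows from the classical large-$n$ asymptotics $j_n(t)\sim t^n/(2n+1)!!$ and $h_n^{(1)}(t) \sim -i(2n-1)!!/t^{n+1}$ together with the fact that the $Y_n^m$ are $L^2(S(0,1))$-normalized, so the coefficients decay like $(r/r')^{n+1}$ with $r<r'$. The remaining steps — bookkeeping in $l$, verifying that $\Bx_l\notin\partial D$ keeps $r_l$ finite, and identifying the limit through \eqref{eq:green} — are routine.
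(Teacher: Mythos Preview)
Your outline matches the paper's proof: split $\partial D$ into faces, apply the addition formula \eqref{eq:mpe} centered at each $\Bx_l$, and interchange the $(n,m)$-sum with the face integral; the identification of $A$ and the uniform-convergence upgrade via a radial buffer $r_l'>r_l$ are exactly what the paper does.

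The one place where your argument is looser than the paper's is the gradient term. Saying ``$U_n^m$ is entire, so differentiate term by term'' is not a valid inference on its own, and $L^2(S(0,1))$-normalization of the $Y_n^m$ gives no pointwise control over the summands (nor over $\nabla Y_n^m$). The paper handles this directly: it writes $\nabla_\By U_n^m$ as a radial part plus a tangential part and then uses the \emph{pointwise} summation identities $\sum_{m=-n}^n|Y_n^m(\hat\By)|^2=(2n+1)/(4\pi)$ and $\sum_{m=-n}^n|(\nabla Y_n^m)(\hat\By)|^2=n(n+1)(2n+1)/(4\pi)$, together with the Bessel asymptotics you quote, to produce a majorant of order $|\By-\Bx_l|^{n-1}/|\Bx-\Bx_l|^{n+1}$ for the $n$-th block of the differentiated series. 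Your reduction ``both operations follow from locally uniform convergence of \eqref{eq:mpe}'' can be made rigorous, but only by invoking interior gradient estimates for harmonic functions (uniform convergence of harmonic functions on a ball forces uniform convergence of all derivatives on a smaller concentric ball); if that is what you intend, you should say so explicitly rather than appeal to $L^2$-normalization.
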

\begin{proof}
Splitting the integral in \eqref{eq:green} into integrals over each of
the faces $\partial D_l$ of the polyhedron $\partial D$ and applying the
addition theorem \eqref{eq:mpe} with center at the corresponding $\Bx_l$
we obtain:
\begin{equation}
 \begin{aligned}
 u_d(\Bx) = ik \sum_{l=1}^{n_{dev}}
 \int_{\partial D_l} \D S_\By &(-\Bn(\By)\cdot \nabla_\By u_i(\By))
  \sum_{n=0}^\infty \sum_{m=-n}^n V_n^m(\Bx-\Bx_l) 
   \conj{ U_n^m(\By-\Bx_l)}\\
  &+ u_i(\By) \Bn(\By) \cdot \nabla_\By 
  \sum_{n=0}^\infty \sum_{m=-n}^n V_n^m(\Bx-\Bx_l) 
   \conj{ U_n^m(\By-\Bx_l)}.
 \end{aligned}
 \label{eq:green2}
\end{equation}
The result \eqref{eq:blnm} follows for $\Bx \notin A$ by switching the
order of the sum and the integral in \eqref{eq:green2}.  For the first
term in the integrand of \eqref{eq:green2}, this switch is justified by
the uniform convergence of the series \eqref{eq:mpe} (for all devices) in
compact sets outside of $A$.

For the second term in the integrand of \eqref{eq:green2}, we shall show
that the series converges uniformly on compact sets outside $A$, so it
is also valid to switch the integral and the series in
\eqref{eq:green2}. To see the uniform convergence, it is useful to split
the products $V_n^m(\Bx-\Bx_l) \nabla_\By \conj{U_n^m(\By-\Bx_l)}$ into
two terms corresponding to the two terms in the gradient
\begin{equation}
 \begin{aligned}
  \nabla_\By U_n^m(\By) &= k \hat{\By} j_n'(k\abs{\By}) Y_n^m(\hat{\By})
  &&+ j_n(k\abs{\By})\frac{\abs{\By}^2 I - \By \By^T}{\abs{\By}^3} (\nabla
  Y_n^m) (\hat{\By})\\
  & = g^{(1)}_{n,m}(\By) &&+g^{(2)}_{n,m}(\By),
  \end{aligned}
  \label{eq:gradunm}
\end{equation}
where $I$ is the $3 \times 3$ identity matrix.

For the series involving the first term in the gradient
\eqref{eq:gradunm} we bound with the triangle and Cauchy-Schwarz
inequalities:
\[
 \begin{aligned}
 &\abs{\sum_{m=-n}^n V_n^m(\Bx-\Bx_l) \conj{g^{(1)}_{n,m}(\By-\Bx_l)}}\\
 &\leq 
 k \abs{h_n^{(1)}(k\abs{\Bx-\Bx_l}) j'_n(k\abs{\By-\Bx_l})}
 \M{ \sum_{m=-n}^n \abs{ Y_n^m ( \hat{\Bx-\Bx_l} ) }^2 }^{\frac{1}{2}}
 \M{ \sum_{m=-n}^n \abs{ Y_n^m ( \hat{\By-\Bx_l} ) }^2 }^{\frac{1}{2}}.
 \end{aligned}
\]
Using the summation theorem for spherical harmonics
(see e.g. Theorem 2.8 in \cite{colton:1998:iae}) 
\begin{equation}
 \sum_{m=-n}^n \abs{Y_n^m(\hat{\By})}^2 = \frac{2n+1}{4\pi}, \text{for
 any $\hat{\By} \in S(0,1)$ and $n = 0,1,\ldots$,}
 \label{eq:ynmsum}
\end{equation}
we get the estimate:
\begin{equation}
 \begin{aligned}
& \abs{\sum_{m=-n}^n V_n^m(\Bx-\Bx_l) \conj{g^{(1)}_{n,m}(\By-\Bx_l)}}\\
  & \leq k\abs{h_n^{(1)}(k\abs{\Bx-\Bx_l}) j'_n(k\abs{\By-\Bx_l})}
 \frac{2n+1}{4\pi} \\
 & = \CO\M{\frac{\abs{\By-\Bx_l}^{n-1}}{\abs{\Bx-\Bx_l}^{n+1}}},
 \end{aligned}
 \label{eq:estimate1}
\end{equation}
for large $n\to \infty$. The last equality comes from
the asymptotic expansion of Bessel functions for fixed $t>0$ and large
order $n$, (see e.g.  \cite[\S 10.19]{olver:2010:NHM})
\[
 \abs{j_n'(t)} = \CO(t^{n-1}) 
~~\text{and}~~ 
 \abs{h_n^{(1)}(t)} = \CO(t^{-n-1}).
\]

For the series involving the second term in the gradient
\eqref{eq:gradunm} we bound the sums
\[
\begin{aligned}
 &\abs{\sum_{m=-n}^n V_n^m(\Bx-\Bx_l) \conj{g^{(2)}_{n,m}(\By-\Bx_l)} }\\
 &\leq   
 2\abs{ h_n^{(1)}(k\abs{\Bx-\Bx_l}) \frac{j_n(k\abs{\By-\Bx_l})}{\abs{\By-\Bx_l}} }
 \M{ \sum_{m=-n}^n \abs{ Y_n^m ( \hat{\Bx-\Bx_l} ) }^2 }^{\frac{1}{2}} 
 \M{ \sum_{m=-n}^n \abs{ (\nabla Y_n^m) ( \hat{\By-\Bx_l} ) }^2 }^{\frac{1}{2}}
 .
\end{aligned}
\]
Using the summation theorem for spherical harmonics \eqref{eq:ynmsum} and
their gradients (see e.g. (6.56) in \cite{colton:1998:iae}),
\begin{equation}
 \sum_{m=-n}^n \abs{(\nabla Y_n^m)(\hat{\By})}^2 = \frac{n (n+1)
 (2n+1)}{4\pi}, ~~ \text{for any $\hat{\By} \in S(0,1)$},
\end{equation}
we get the asymptotic
\begin{equation}
 \begin{aligned}
 &\abs{\sum_{m=-n}^n V_n^m(\Bx-\Bx_l) \conj{g^{(2)}_{n,m}(\By-\Bx_l)} }\\
 &\leq   
 2\abs{ h_n^{(1)}(k\abs{\Bx-\Bx_l}) \frac{j_n(k\abs{\By-\Bx_l})}{\abs{\By-\Bx_l}} }
 \M{\frac{2n+1}{4\pi}}^{\frac{1}{2}}
 \M{\frac{n(n+1)(2n+1)}{4\pi}}^{\frac{1}{2}}\\
 & = \CO\M{\frac{\abs{\By-\Bx_l}^{n-1}}{\abs{\Bx-\Bx_l}^{n+1}}}.
 \end{aligned}
 \label{eq:estimate2}
\end{equation}
Here we have used that for $t>0$ fixed and as $n\to \infty$, (see e.g.
\cite[\S 10.19]{olver:2010:NHM}) 
\[
 \abs{j_n(t)} = \CO(t^{n}) 
~~\text{and}~~ 
 \abs{h_n^{(1)}(t)} = \CO(t^{-n-1}).
\]

The estimates \eqref{eq:estimate1} and \eqref{eq:estimate2} give
uniformly convergent majorants for the series in the second term of
\eqref{eq:green2}, since when $\Bx \notin A$ we have $\abs{\By-\Bx_l} <
\abs{\Bx-\Bx_l}$, for $l=1,\ldots,n_{dev}$. The proof is now completed.
\qed
\end{proof}

\subsection{A family of exterior cloaks with four devices}
\label{sec:family}

\begin{figure}
 \begin{center}
 \begin{tabular}{cc}
  \includegraphics[width=0.4\textwidth]{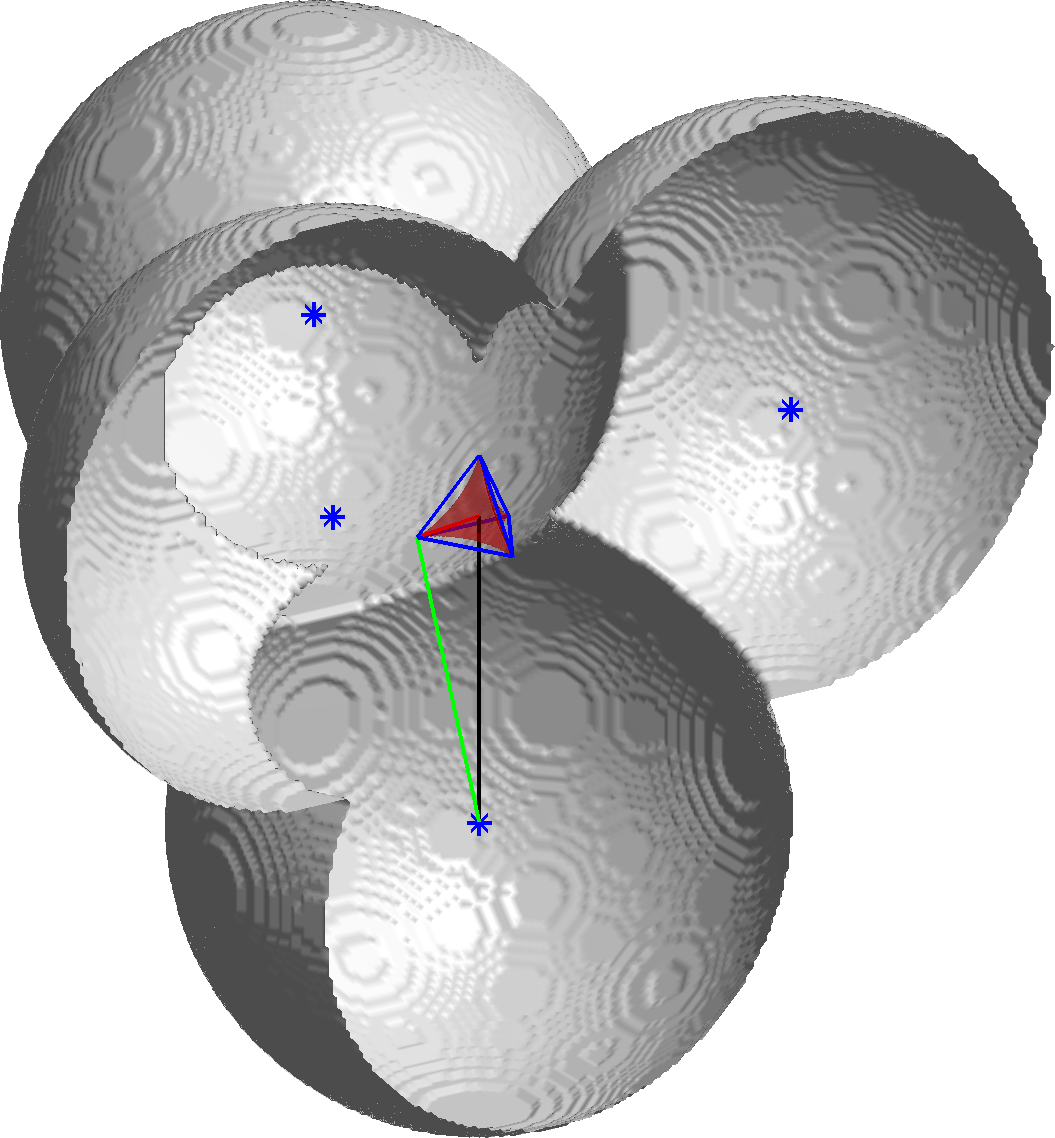} &
  \includegraphics[width=0.4\textwidth]{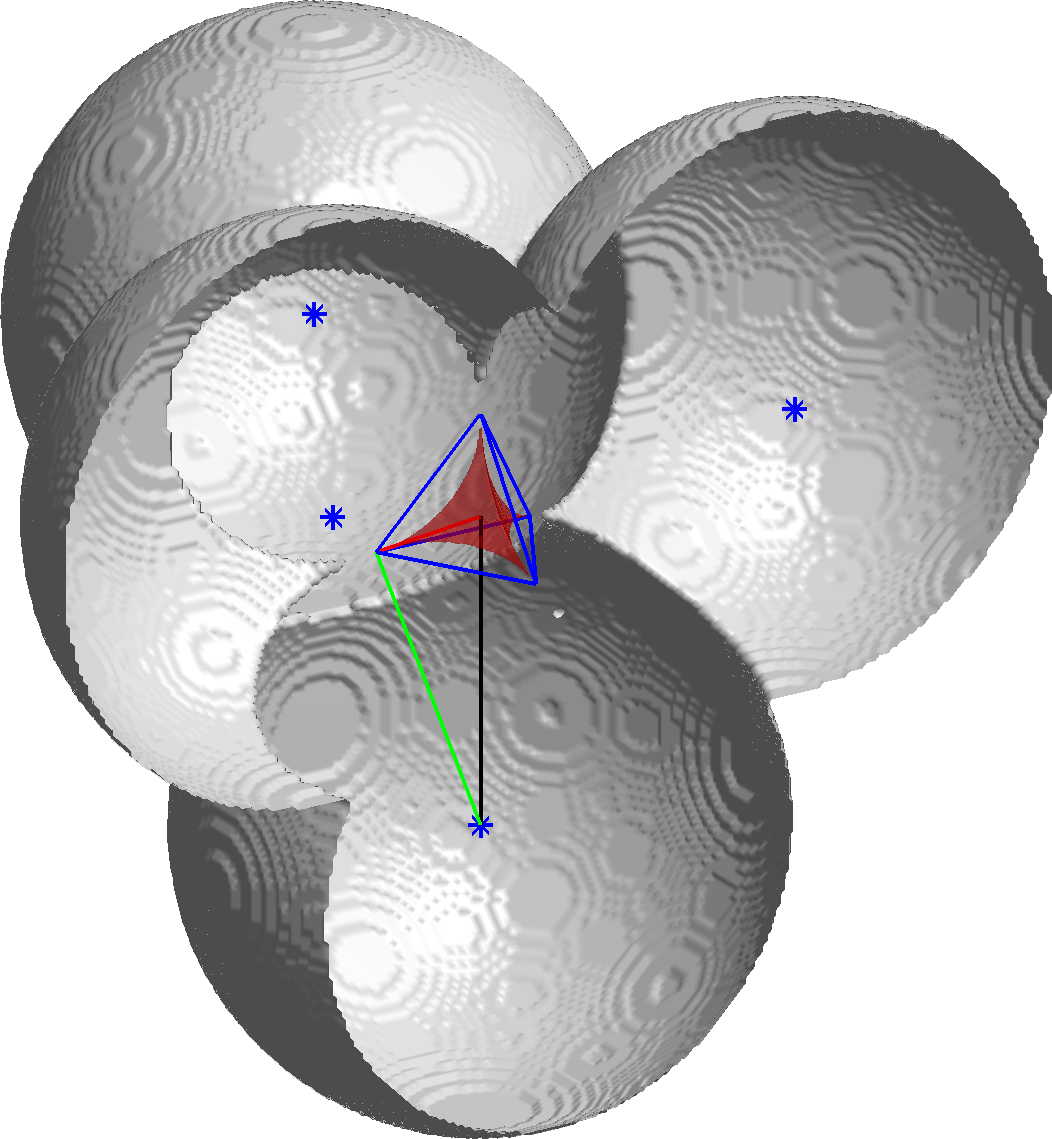}\\
  (a) suboptimal, $\sigma = \delta/5$ & (b) optimal, $\sigma = \delta/3$
 \end{tabular}
 \end{center}
 \caption{The configuration for the tetrahedron based cloak of
 \secref{sec:family}. The distance in red is the radius $\sigma$ of the
 circumsphere to the tetrahedron $D$. The distance in black is the
 distance $\delta$ from the origin to a device. The distance
 $r(\sigma,\delta)$ (in green) is the distance from a device to the
 closest vertex of $D$. The exterior surface of the region 
 $A$ of \thmref{thm:conv} is in grey and has been cut to reveal the cloaked region
 $D\setminus A$ in red. The four devices are shown with stars.}
 \label{fig:tetra}
\end{figure}

Nothing in \thmref{thm:conv} guarantees that the cloaked region
$D\setminus A$ is non-empty. We show here how to construct a family of
cloaks with non-empty $D\setminus A$ based on Green's identities
applied to a regular tetrahedron $D$. We also determine what is the
position of the devices that gives the largest cloaked region within
this family.

Consider a regular tetrahedron with circumsphere $S(0,\sigma)$ and
vertices $\Ba_1,\ldots,\Ba_4$. We locate the devices
$\Bx_1,\ldots,\Bx_4$ on $S(0,\delta)$, with $\delta > \sigma$, such that
$\Bx_l$ replaces the face opposite to vertex $\Ba_l$, that is $\Bx_l$ and
$\Ba_l$ are on opposite sides of the plane formed by the face of the
tetrahedron not containing $\Ba_l$. For simplicity we also require that
$\Bx_l - \Ba_l$ is normal to this plane. The configuration is sketched
in \figref{fig:tetra}. Simple geometric arguments show that the radii of
the balls that define the region $A$ are all equal to
\begin{equation}
 r(\sigma,\delta) = \M{\M{\sigma - \frac{\delta}{3}}^2 + \frac{8}{9}
 \delta^2 }^{\frac{1}{2}}.
\end{equation}
Moreover the radius of the largest sphere fitting inside the cloaked
region is
\begin{equation}
 r_{\text{eff}}(\sigma,\delta) = \delta - r(\sigma,\delta).
\end{equation}
For fixed $\delta$, the largest possible cloaked region is obtained when
$\sigma = \delta/3$ which corresponds to the case when every triplet of
balls in the definition of region $A$ touch at a vertex $\Ba_l$ of the
tetrahedron.  Thus for fixed $\delta$, the largest sphere we can fit
inside the cloaked region has radius,
\begin{equation}
 r_{\text{eff}}^* = \M{1- \frac{2\sqrt{2}}{3} }\delta \approx 0.057 \delta.
 \label{eq:reff}
\end{equation}

\subsection{Numerical experiments}
We report in \figref{fig:fields} simulations of this cloaking method
with the setup described in \secref{sec:family}. The incident field we
take is the plane wave $u_i(\Bx) = \exp[i k \hat{\Bk} \cdot \Bx]$ with
direction  vector $\hat{\Bk} = [1,1,1]/\sqrt{3}$. We first compute the
device field of \thmref{thm:conv} by truncating the sum in $n$ of
\eqref{eq:udev} to $n\leq N$. Throughout our numerical experiments we
determine $N$ with the heuristic (found by numerical experimentation)
\begin{equation} 
 N(\delta) = \ceil{1.5 k \delta }, 
 \label{eq:heuristic}
\end{equation}
where $\ceil{x}$ is the smallest integer larger than or equal to $x$.
The integrals in \eqref{eq:blnm} were evaluated with a simple quadrature
rule that is exact for piecewise linear functions on a uniform
triangulation of the faces of the tetrahedron $D$, we chose the number
of quadrature points so that there are at least eight points per
wavelength. The scattered field by a ball was computed by first
evaluating the incident field (or device field depending on the case) on
a grid with equal number of points in $\phi$ and $\theta$ and then
finding its first few spherical harmonic decomposition coefficients
using the sampling theorem \cite{driscoll:1994:CFT}.

As can be seen in the first  row of \figref{fig:fields} the device field
$u_d$ is virtually zero far from $A$ while being close to the incident
field in the cloaked region $D\setminus A$. In the second and third
rows of \figref{fig:fields} we display the total field in the presence
of a sound-soft (homogeneous Dirichlet boundary condition) ball centered
at the origin and of radius $3 r_{\text{eff}}^*(\delta)$ (i.e. a larger
scatterer than what we expected from \secref{sec:family}). The scattered
field from the ball reveals the ball's position when the devices are
inactive (third row). The scattered field is essentially suppressed when
the cloaking devices are active (second row), as the field is
indistinguishable from a plane wave far from $A$.

Since as $t\to 0$, $h_n^{(1)}(t) = \CO(t^{-n-1})$ (see e.g.  \cite[\S
10.52]{olver:2010:NHM}), we expect the device field $u_d$ to blow up as
we get close to the device locations $\Bx_l$. This blow up corresponds
to the ``urchins'' in the first and second rows of \figref{fig:fields}
where even with the truncation of the series \eqref{eq:udev}, we
observe very large wave amplitudes which would be hard to realize in
practice. Fortunately we can enclose the regions with very large fields
by a surface and apply Green's formula \eqref{eq:green} to replace these
large fields by (hopefully) more manageable single and double layer
potentials on the surface of some ``extended'' cloaking devices. 

We illustrate these ``extended'' devices in \figref{fig:vol} where we
display the level sets where the device field amplitude is 5 (or 100)
times the amplitude of the incident field. At least for the particular
configuration ($\delta = 6\lambda$) considered in \figref{fig:vol},
these surfaces resemble spheres surrounding each device location
$\Bx_l$. The ``extended'' devices still leave the cloaked region (in red
in \figref{fig:vol}) communicating (connected) with the background medium. This is
why we call our cloaking method ``exterior cloaking''.

We also consider the extended devices for larger values of $\delta$ in
\figref{fig:map}. Here we look at the cross-section of the extended
devices on $S(0,\sigma)$, which in the construction of
\secref{sec:family} is the circumsphere to the tetrahedron $D$. In the
optimal case $\delta = 3\sigma$, the predicted cloaked region
$D\setminus A$ and the exterior $\real^3 \setminus A$ meet on
$S(0,\sigma)$ at the vertices of the tetrahedron $D$. We see that the
extended devices (in black in \figref{fig:map}) grow as $\delta$
increases, and leave gorges communicating the cloaked region with the
exterior. The centers of the gorges appear to agree with the vertices of
the tetrahedron $D$. The percentage area of $S(0,\sigma)$ that is not
covered by the cross-section of the extended devices on $S(0,\sigma)$ is
also quantified in \figref{fig:perf}(b). Since the relative area of the
openings appears to decrease monotonically with $\delta/\lambda$,
\figref{fig:perf}(b) suggests the gorges close for large enough
$\delta/\lambda$. Further investigation is needed to find out whether
the shrinking openings in the cloak is due to our choice of $N$ with
heuristic \eqref{eq:heuristic}.

Finally we give in \figref{fig:perf}(a) quantitative measures of the
cloak performance for different values of $\delta$. These measures show
that the device field is close to minus the incident field inside the
cloaked region and that it is very small outside of the cloaked region.

\begin{figure}
\begin{center}
\newcommand{\onerow}[1]{
 \includegraphics[width=0.18\textwidth]{#1_sl1} & 
 \includegraphics[width=0.18\textwidth]{#1_sl2} & 
 \includegraphics[width=0.18\textwidth]{#1_sl3} & 
 \includegraphics[width=0.18\textwidth]{#1_sl4} & 
 \includegraphics[width=0.18\textwidth]{#1_sl5}}

\begin{tabular}{c@{}c@{\hspace{0.01\textwidth}}c@{\hspace{0.01\textwidth}}c@{\hspace{0.01\textwidth}}c@{\hspace{0.01\textwidth}}c}
 & $z=-2\sigma$ & $z=-\sigma$ & $z=0$ & $z=\sigma$ & $z=2\sigma$\\
 \rlab{$u_d$}{1.5em}                  & \onerow{cl_udev}\\
 \rlab{$u_{tot}$ (active)}{0.0em}     & \onerow{cl_utot}\\
 \rlab{$u_{tot}$ (inactive)}{0.0em}   & \onerow{ucl_utot}
\end{tabular}
\end{center}
\caption{Constant $z$ slices of the real part of different fields, for
the optimal case $\delta=3\sigma$ and with $\delta=6\lambda$. The first
row shows the device field $u_d$ which is close to zero
far from the devices and close to $-u_i$ in a small region close to the
origin.  The second and third rows show the total field when the devices
are active and inactive in the presence of a scatterer. The scatterer is
a sound-soft ball centered at the origin and of radius
$3r_{\text{eff}}^*(\delta)$. Even though this ball is not completely
contained inside the tetrahedron $D$, the scattered field is greatly
suppressed when the devices are active, making the ball harder to detect
far from the devices. The color scale is linear from -1 (dark blue) to 1
(dark red) and each box is $10\lambda \times 10 \lambda$, with the
$z-$axis at the center.}
\label{fig:fields}
\end{figure}

\begin{figure}
\begin{center}
 \begin{tabular}{cc}
 \includegraphics[width=0.4\textwidth]{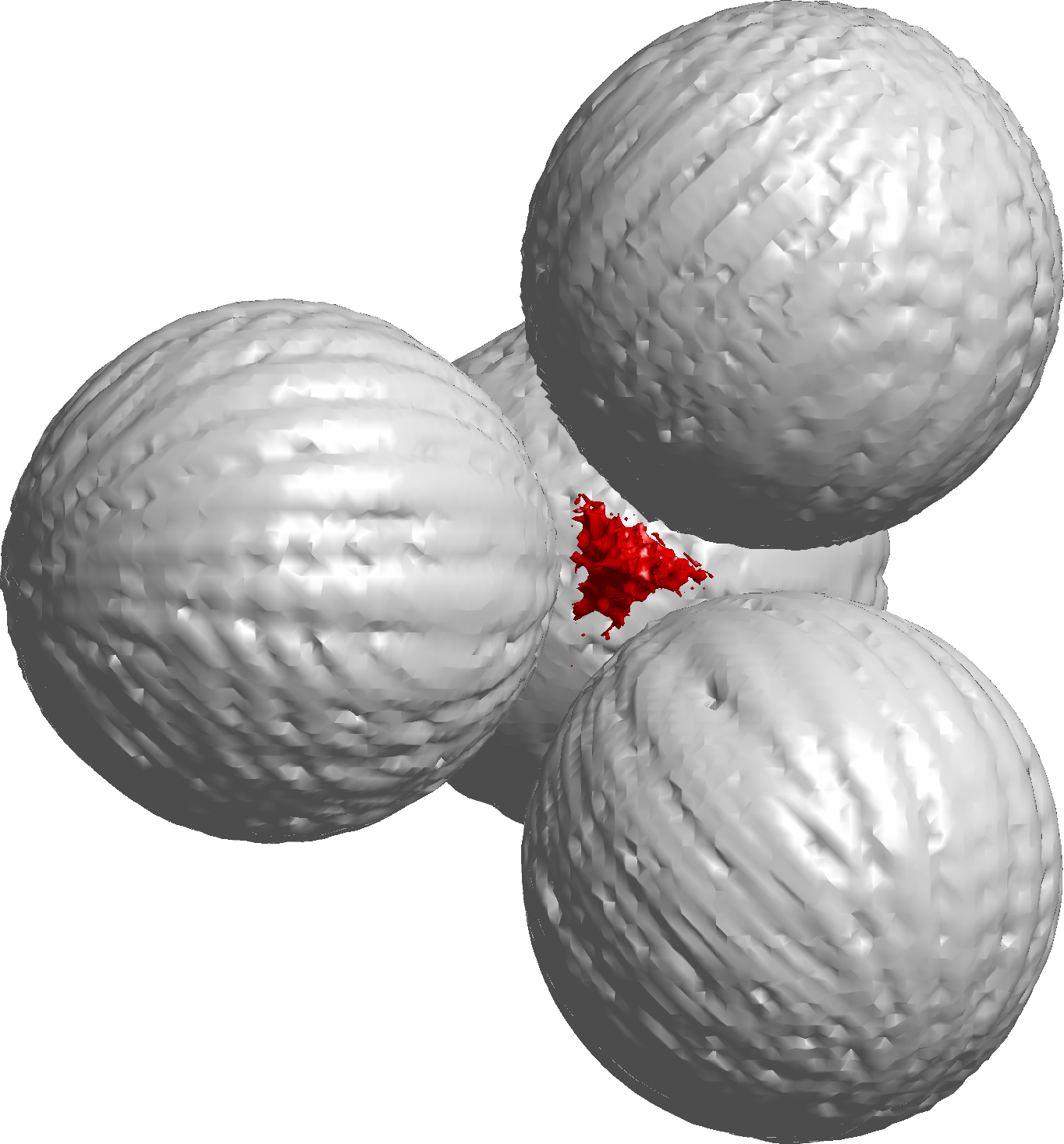} &
 \includegraphics[width=0.4\textwidth]{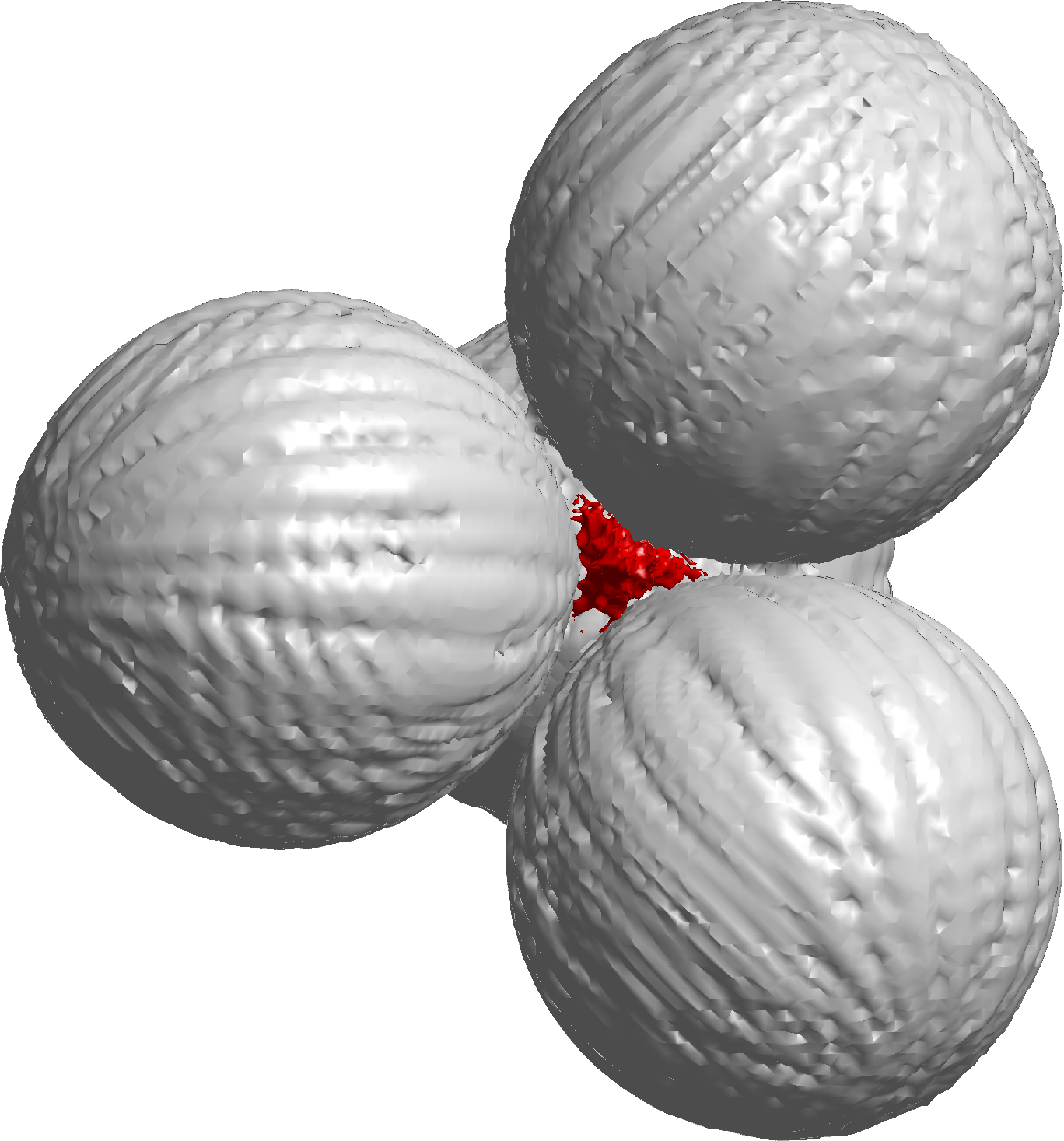}\\
 (a) $|u_d| = 100$ & (b) $|u_d| = 5$
 \end{tabular}
\end{center}
\caption{Contours of $|u_d|$ (gray) and $|u_d+u_i|=10^{-2}$ (red).
Here the vector $(0,0,1)$ is perpendicular to plane of the page. By
Green's identity it is possible to replace the large fields inside the
gray surfaces by a single and double layer potential at the gray
surfaces. These ``extended devices'' need only to generate fields that
are at most the fields on the contours that we plot and they cloak the
red region without completely surrounding it.}
\label{fig:vol}
\end{figure}

\begin{figure}
\begin{center}
 \begin{tabular}{cc}
 $\delta=6\lambda$ & $\delta = 12 \lambda$\\
 \includegraphics[width=0.4\textwidth]{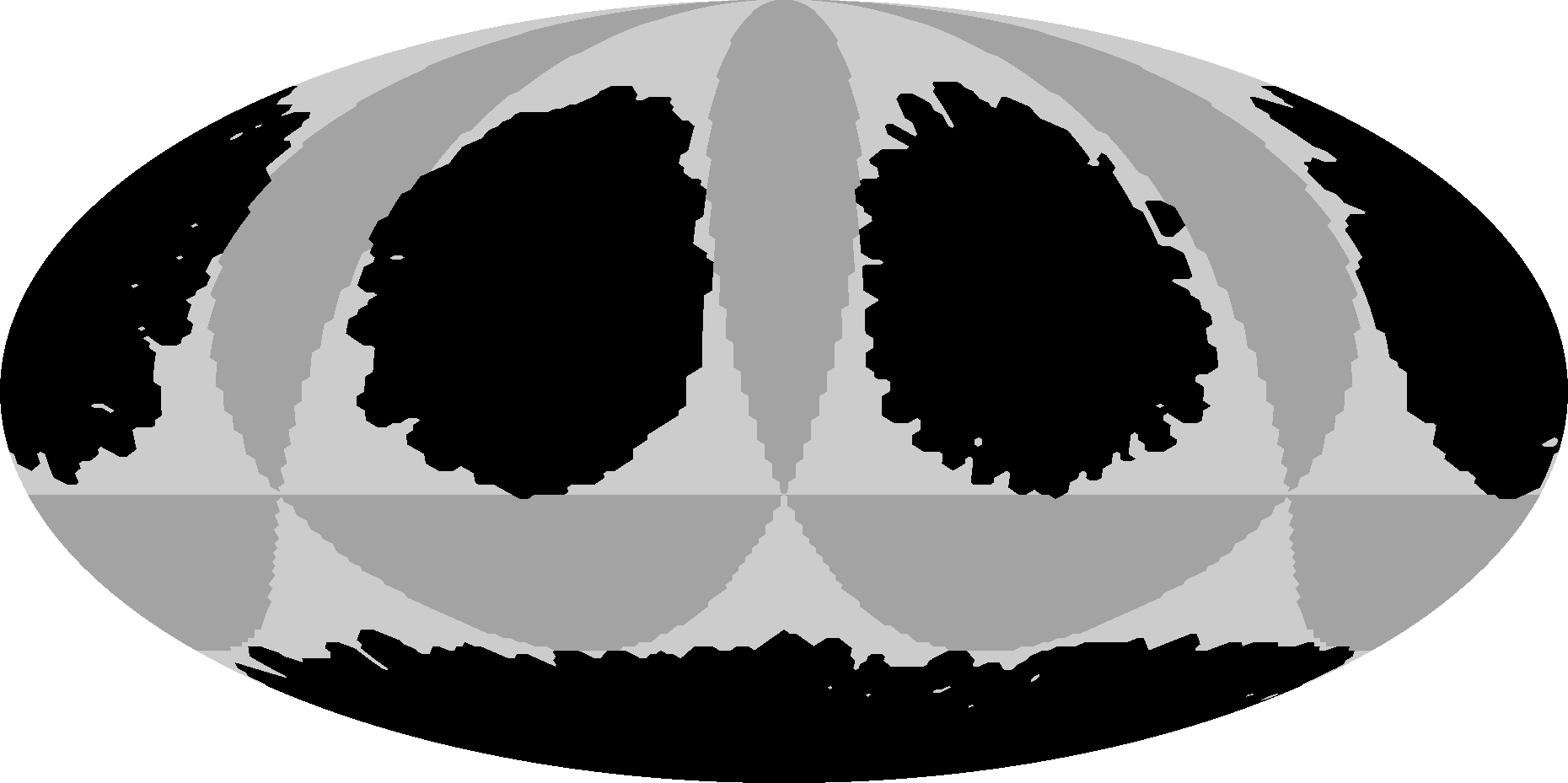} &
 \includegraphics[width=0.4\textwidth]{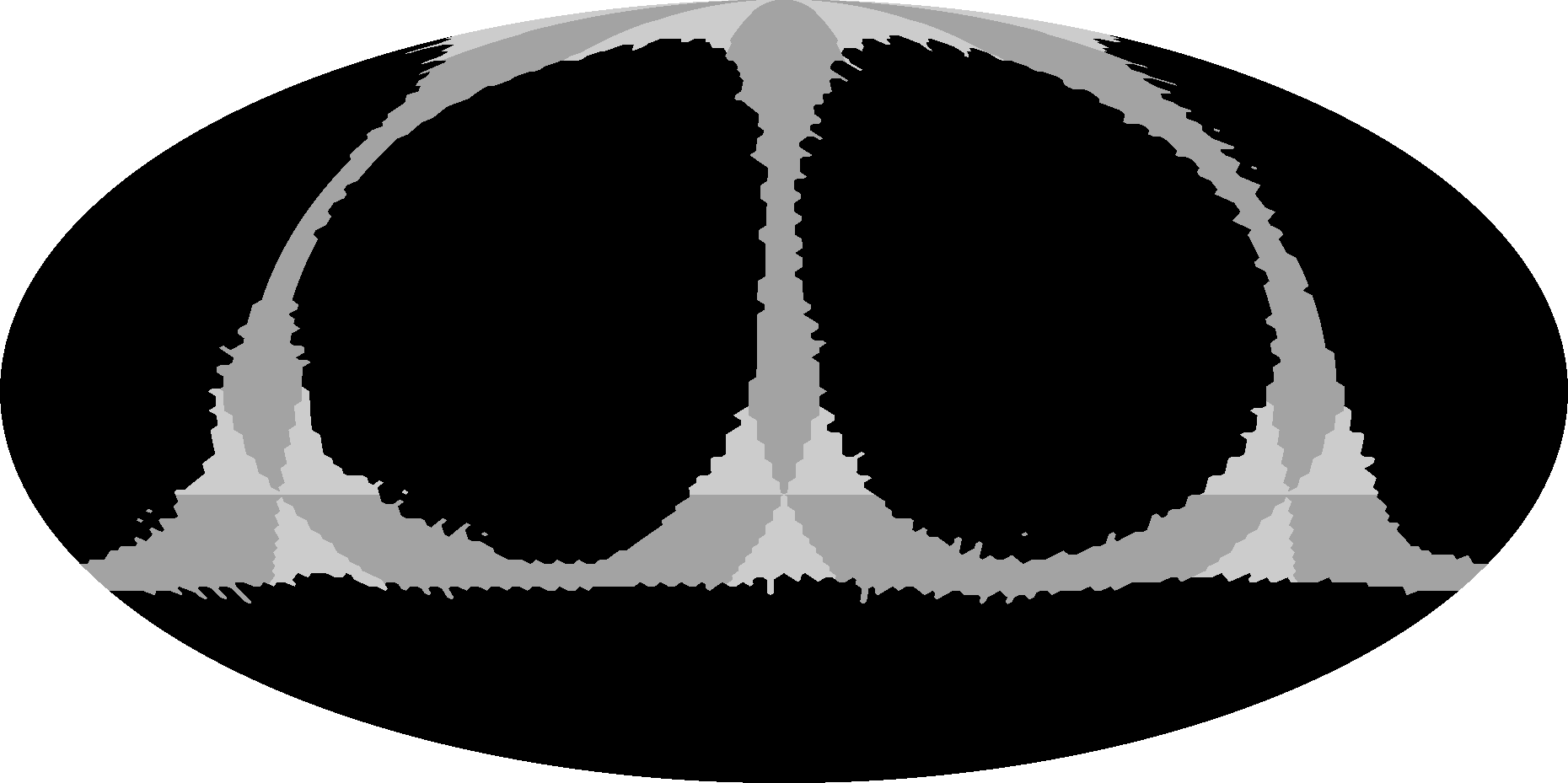}\\
 \includegraphics[width=0.4\textwidth]{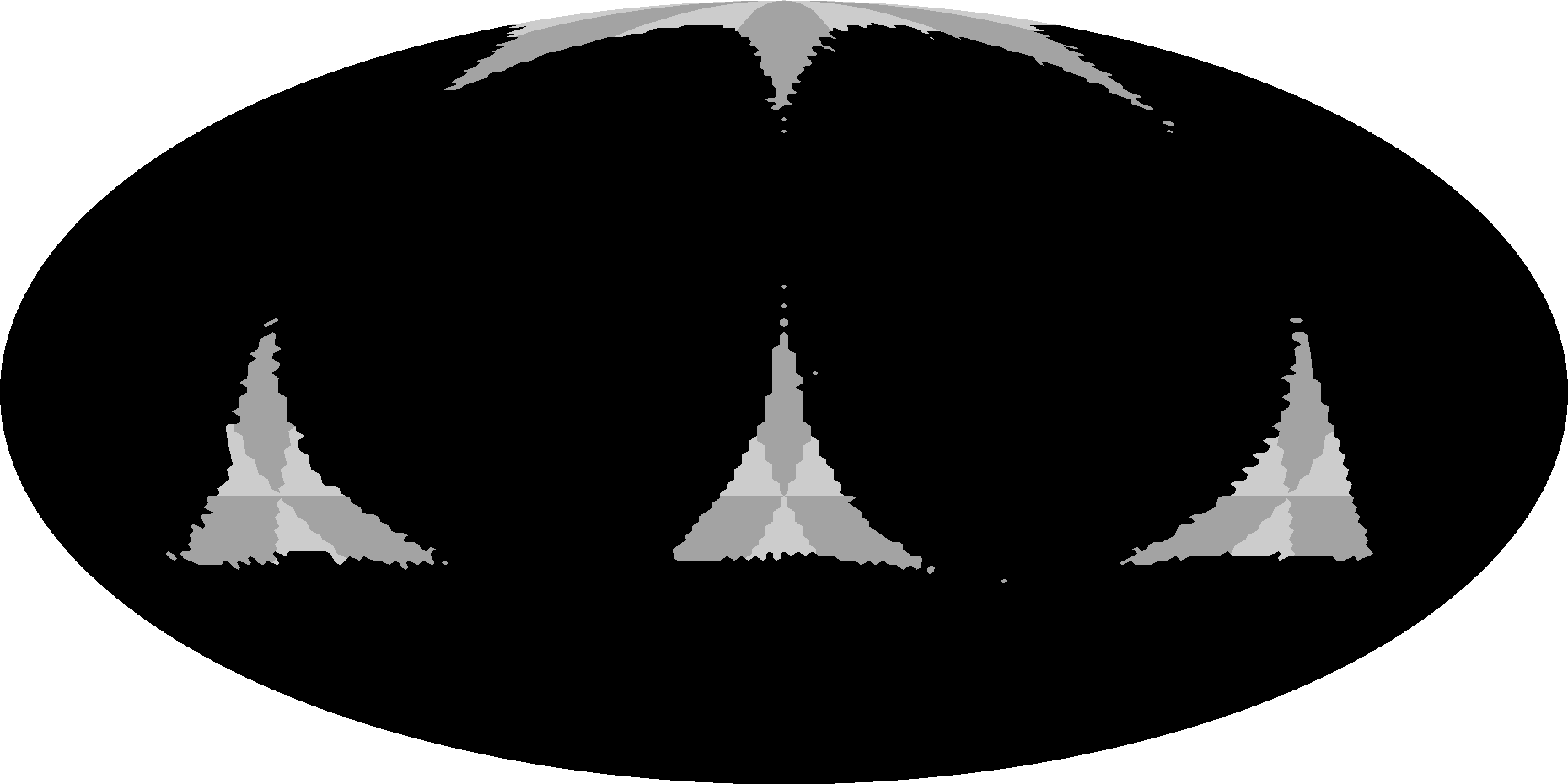} &
 \includegraphics[width=0.4\textwidth]{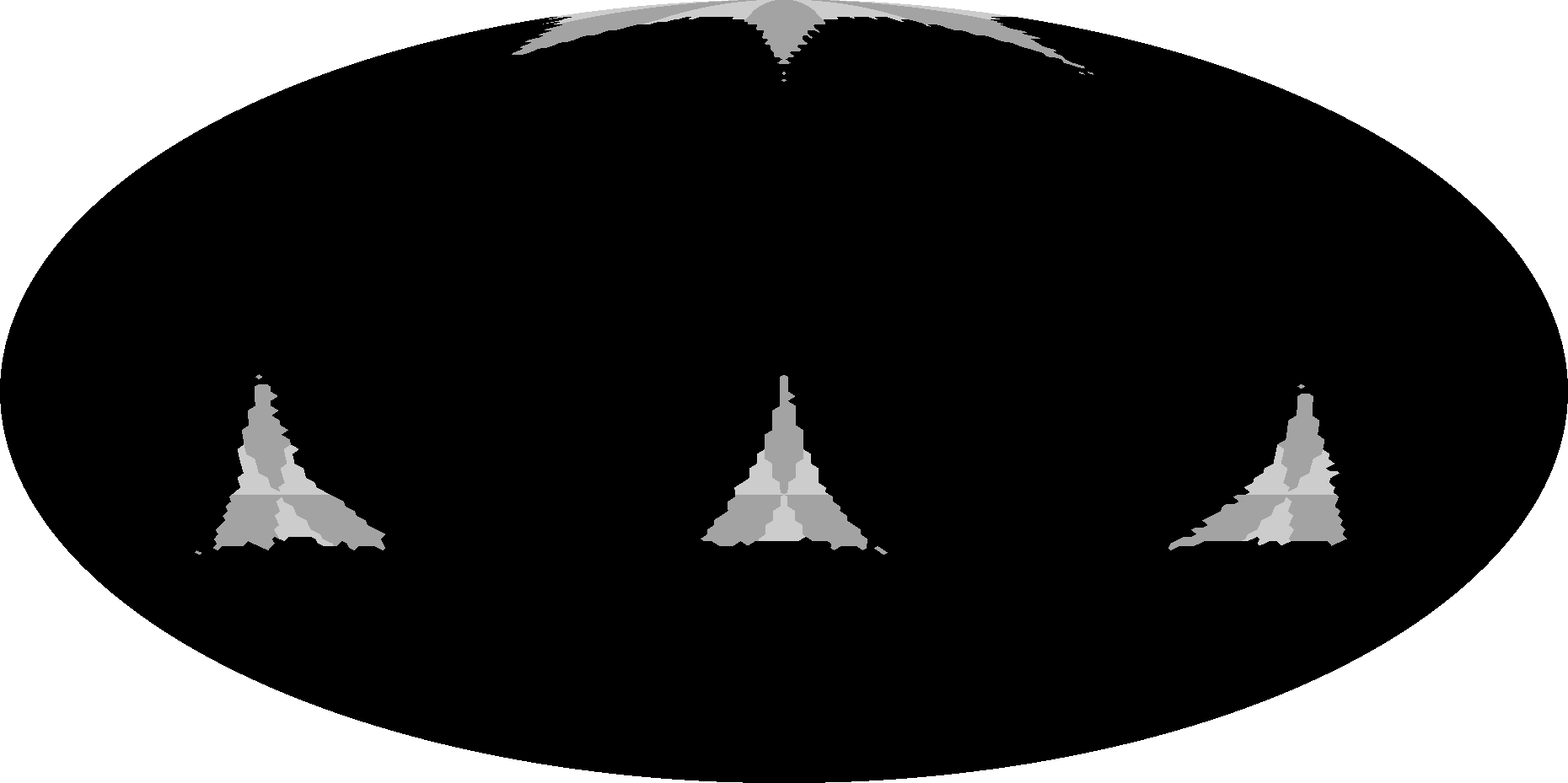}\\
 $\delta=18\lambda$ & $\delta = 24\lambda$
 \end{tabular}
\end{center}
\caption{Cross-section of level set $|u_d| \geq 10^2$ (black) and of the
region $A$ (shades of gray) on the sphere $|\Bx| = \sigma$ for  the
optimal $\sigma=\delta/3$. Here we used the equal area Mollweide projection
(see e.g.  \cite{feeman:2002:PTE}). In the optimal case, each triplet
out of the four balls forming $A$ meets at a single point which is a
vertex of the tetrahedron $D$. Note that for the cases in the first row
there are four distinct extended devices. The leftmost and rightmost
spots correspond to one single device split in two by the projection.}
\label{fig:map}
\end{figure}

\begin{figure}
\begin{center}
\begin{tabular}{cc}
\rlab{(percent)}{4em} \includegraphics[width=0.45\textwidth]{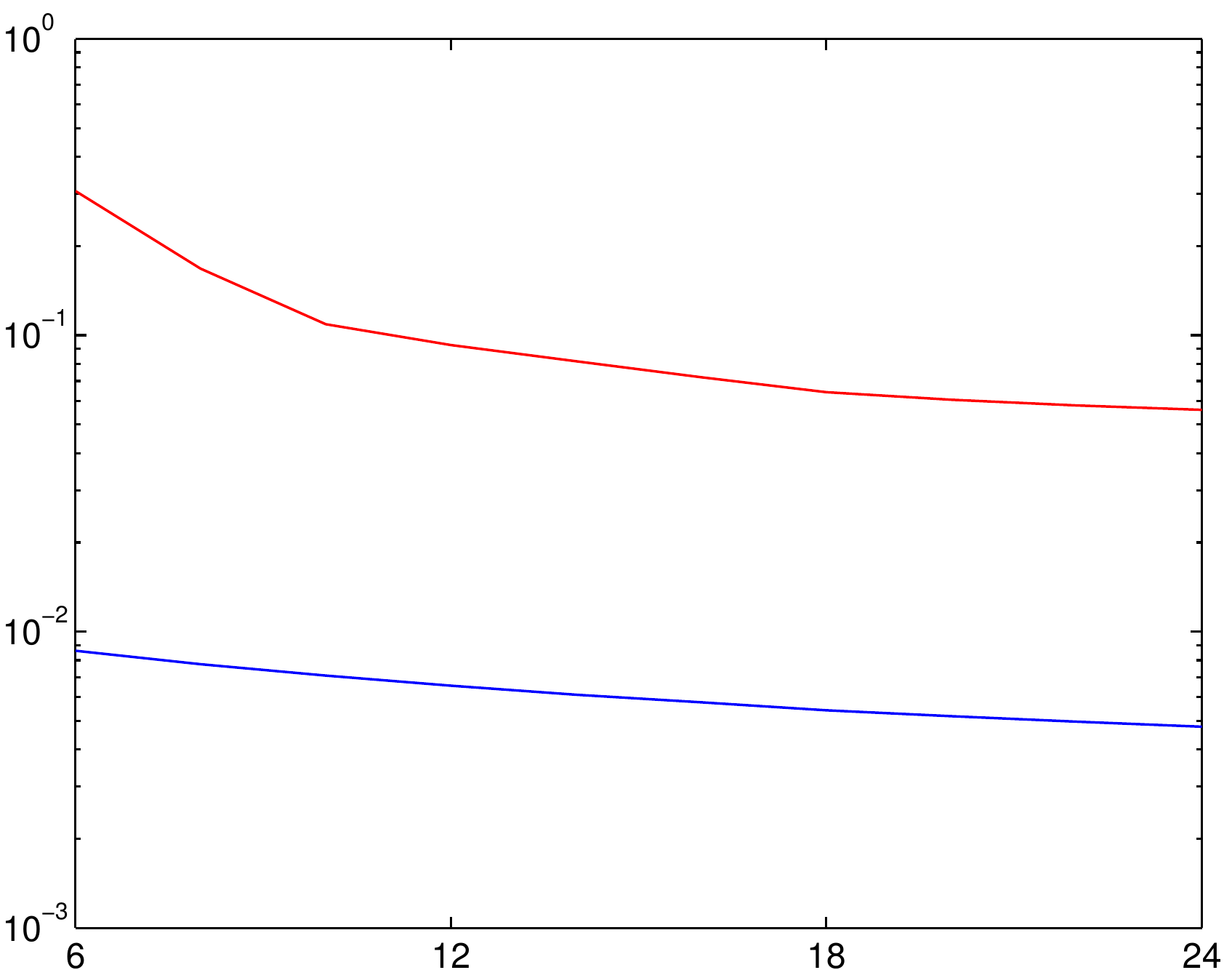} & 
\rlab{(percent)}{4em} \includegraphics[width=0.45\textwidth]{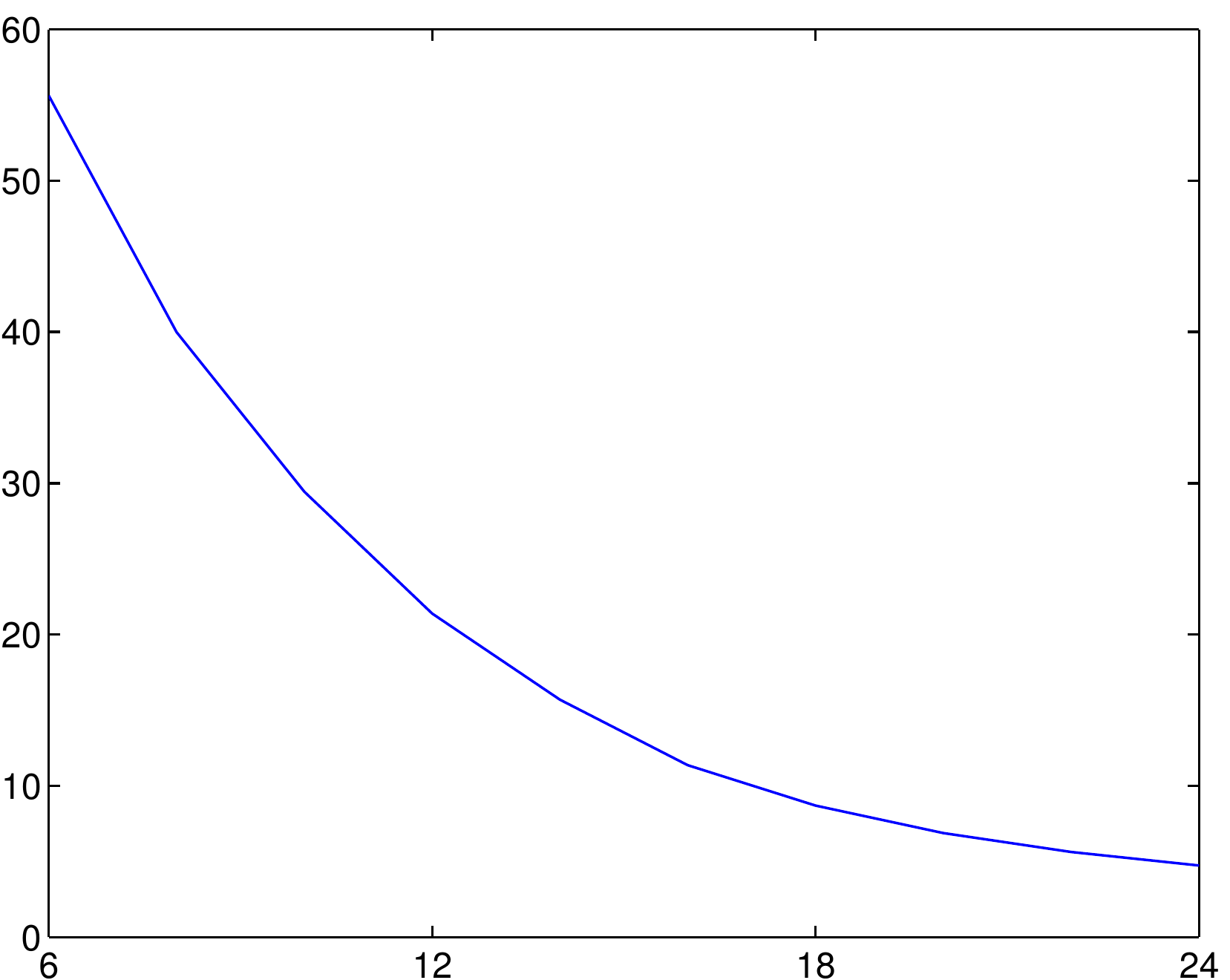} \\
 (a: $\delta/\lambda$) & (b: $\delta/\lambda$) \\
\end{tabular}
\end{center}
\caption{(a) Cloak performance. In red: $\| u_i + u_d \| / \| u_i \|$,
where the norm is the $L^2(S(0,r_{\text{eff}}^*(\delta))$ norm, which
measures how well we approximate the incident field inside the cloaked
region. In blue: $\|u_d\|/ \|u_i\|$, where the norm is the
$L^2(S(0,2\delta))$ norm, which measures how small is the device field
far away from the devices. (b) Percentage of the area outside the
cross-section of the extended devices on the sphere $S(0,\sigma=\delta/3)$ for
different values of $\delta$.}
\label{fig:perf}
\end{figure}

\begin{acknowledgement}
 GWM is grateful for support from the University of Toulon-Var.
 GWM and DO are grateful to the National Science Foundation for support
 through grant DMS-0707978. FGV is grateful to the National Science
 Foundation for support through grant DMS-0934664.
 FGV, GWM and DO are grateful to the Mathematical Sciences Research
 Institute where parts of this manuscript were completed.
 The computations of the device and scattered fields in \secref{sec:helm} were
 facilitated by the freely available spherical harmonics library SHTOOLS
 by Mark Wieczorek, available at
 \url{http://www.ipgp.fr/~wieczor/SHTOOLS/SHTOOLS.html}. 
\end{acknowledgement}

\bibliographystyle{spbasic}
\bibliography{chbib}

\begin{thebibliography}{51}
\providecommand{\natexlab}[1]{#1}
\providecommand{\url}[1]{{#1}}
\providecommand{\urlprefix}{URL }
\expandafter\ifx\csname urlstyle\endcsname\relax
  \providecommand{\doi}[1]{DOI~\discretionary{}{}{}#1}\else
  \providecommand{\doi}{DOI~\discretionary{}{}{}\begingroup
  \urlstyle{rm}\Url}\fi
\providecommand{\eprint}[2][]{\url{#2}}

\bibitem[{Al\'u and Engheta(2008)}]{Alu:2008:PMC}
Al\'u A, Engheta N (2008) Plasmonic and metamaterial cloaking: physical
  mechanisms and potentials. J Opt A: Pure Appl Opt 10:093,002

\bibitem[{Bouchitt{\'e} and Schweizer(2010)}]{Bouchitte:2010:HME}
Bouchitt{\'e} G, Schweizer B (2010) Homogenization of {Maxwell's} equations in
  a split ring geometry. SIAM J Multiscale Model Sim 8(3):717--750

\bibitem[{Brun et~al(2009)Brun, Guenneau, and Movchan}]{Brun:2009:ACI}
Brun M, Guenneau S, Movchan A (2009) Achieving control of in-plane elastic
  waves. Appl Phys Lett 94:061,903

\bibitem[{Cai and Shalaev(2010)}]{Cai:2010:OM}
Cai W, Shalaev V (2010) Optical Metamaterials: Fundamentals and Applications.
  Springer, Dordrecht

\bibitem[{Chen and Chan(2007)}]{Chen:2007:ACT}
Chen H, Chan CT (2007) Acoustic cloaking in three dimensions using acoustic
  metamaterials. Appl Phys Lett 91:183,518

\bibitem[{Chen and Chan(2010)}]{Chen:2010:ACT}
Chen H, Chan CT (2010) Acoustic cloaking and transformation acoustics. J Phys D
  Appl Phys 43(11):113,001, \doi{10.1088/0022-3727/43/11/113001}

\bibitem[{Chen et~al(2009)Chen, Hou, Chen, Ao, Wen, and Chan}]{Chen:2009:DER}
Chen H, Hou B, Chen S, Ao X, Wen W, Chan CT (2009) Design and experimental
  realization of a broadband transformation media field rotator at microwave
  frequencies. Phys Rev Lett 102:183,903

\bibitem[{Colton and Kress(1998)}]{colton:1998:iae}
Colton D, Kress R (1998) Inverse acoustic and electromagnetic scattering
  theory, Applied Mathematical Sciences, vol~93, 2nd edn. Springer-Verlag,
  Berlin

\bibitem[{Cummer and Schurig(2007)}]{Cummer:2007:PAC}
Cummer SA, Schurig D (2007) One path to acoustic cloaking. New J Phys 9:45

\bibitem[{Driscoll and Healy(1994)}]{driscoll:1994:CFT}
Driscoll JR, Healy DM Jr (1994) Computing {F}ourier transforms and convolutions
  on the {$2$}-sphere. Adv in Appl Math 15(2):202--250,
  \doi{10.1006/aama.1994.1008}

\bibitem[{Evans and Gariepy(1992)}]{evans:1992:MTF}
Evans LC, Gariepy RF (1992) Measure theory and fine properties of functions.
  Studies in Advanced Mathematics, CRC Press, Boca Raton, FL

\bibitem[{Feeman(2002)}]{feeman:2002:PTE}
Feeman TG (2002) Portraits of the earth, Mathematical World, vol~18. American
  Mathematical Society, Providence, RI

\bibitem[{Ffowcs~Williams(1984)}]{Ffowcs:1984:RLA}
Ffowcs~Williams JE (1984) Review lecture: Anti-sound. Proc R Soc A 395:63--88

\bibitem[{Gardiner(1995)}]{gardiner:1995:HA}
Gardiner SJ (1995) Harmonic approximation, London Mathematical Society Lecture
  Note Series, vol 221. Cambridge University Press, Cambridge,
  \doi{10.1017/CBO9780511526220}

\bibitem[{Greenleaf et~al(2003{\natexlab{a}})Greenleaf, Lassas, and
  Uhlmann}]{Greenleaf:2003:ACC}
Greenleaf A, Lassas M, Uhlmann G (2003{\natexlab{a}}) Anisotropic
  conductivities that cannot be detected by {EIT}. Physiol Meas 24:413--419

\bibitem[{Greenleaf et~al(2003{\natexlab{b}})Greenleaf, Lassas, and
  Uhlmann}]{Greenleaf:2003:NCI}
Greenleaf A, Lassas M, Uhlmann G (2003{\natexlab{b}}) On non-uniqueness for
  {Calder\'on's} inverse problem. Math Res Lett 10:685--693

\bibitem[{Greenleaf et~al(2007)Greenleaf, Kurylev, Lassas, and
  Uhlmann}]{Greenleaf:2007:FWI}
Greenleaf A, Kurylev Y, Lassas M, Uhlmann G (2007) Full-wave invisibility of
  active devices at all frequencies. Commun Math Phys 275:749--789

\bibitem[{Greenleaf et~al(2009)Greenleaf, Kurylev, Lassas, and
  Uhlmann}]{Greenleaf:2009:CDE}
Greenleaf A, Kurylev Y, Lassas M, Uhlmann G (2009) Cloaking devices,
  electromagnetic wormholes, and transformation optics. SIAM Rev 51(1):3--33

\bibitem[{Guevara~Vasquez et~al(2009{\natexlab{a}})Guevara~Vasquez, Milton, and
  Onofrei}]{Vasquez:2009:AEC}
Guevara~Vasquez F, Milton GW, Onofrei D (2009{\natexlab{a}}) Active exterior
  cloaking for the {2D} {L}aplace and {H}elmholtz equations. Phys Rev Lett
  103:073,901, \doi{10.1103/PhysRevLett.103.073901}

\bibitem[{Guevara~Vasquez et~al(2009{\natexlab{b}})Guevara~Vasquez, Milton, and
  Onofrei}]{Vasquez:2009:BEC}
Guevara~Vasquez F, Milton GW, Onofrei D (2009{\natexlab{b}}) Broadband exterior
  cloaking. Opt Express 17:14,800--14,805, \doi{10.1364/OE.17.014800}

\bibitem[{Guevara~Vasquez et~al(2011{\natexlab{a}})Guevara~Vasquez, Milton, and
  Onofrei}]{Vasquez:2011:CCS}
Guevara~Vasquez F, Milton GW, Onofrei D (2011{\natexlab{a}}) Complete
  characterization and synthesis of the response function of elastodynamic
  networks. J Elasticity 102(1):31--54, \doi{10.1007/s10659-010-9260-y}

\bibitem[{Guevara~Vasquez et~al(2011{\natexlab{b}})Guevara~Vasquez, Milton, and
  Onofrei}]{Vasquez:2011:ECA}
Guevara~Vasquez F, Milton GW, Onofrei D (2011{\natexlab{b}}) Exterior cloaking
  with active sources in two dimensional acoustics, submitted to Wave Motion.
  ArXiv: 1009.2038 [math-ph].

\bibitem[{Guevara~Vasquez et~al(2011{\natexlab{c}})Guevara~Vasquez, Milton, and
  Onofrei}]{Vasquez:2011:MAA}
Guevara~Vasquez F, Milton GW, Onofrei D (2011{\natexlab{c}}) Mathematical
  analysis of two dimensional active exterior cloaking in the quasitatic
  regime, in preparation

\bibitem[{Jessel and Mangiante(1972)}]{Jessel:1972:ASA}
Jessel MJM, Mangiante GA (1972) Active sound absorbers in an air duct. J Sound
  Vib 23(3):383--390

\bibitem[{Kohn et~al(2008)Kohn, Shen, Vogelius, and Weinstein}]{Kohn:2008:CCV}
Kohn RV, Shen H, Vogelius MS, Weinstein MI (2008) Cloaking via change of
  variables in electric impedance tomography. Inverse Probl 24:015,016

\bibitem[{Kohn et~al(2010)Kohn, Onofrei, Vogelius, and
  Weinstein}]{Kohn:2010:CCV}
Kohn RV, Onofrei D, Vogelius MS, Weinstein MI (2010) Cloaking via change of
  variables for the helmholtz equation. Commun Pur Appl Math 63(8):973--1016

\bibitem[{Lai et~al(2009)Lai, Ng, Chen, Han, Xiao, Zhang, and
  Chan}]{Lai:2009:IOO}
Lai Y, Ng J, Chen H, Han D, Xiao J, Zhang ZQ, Chan CT (2009) Illusion optics:
  The optical transformation of an object into another object. Phys Rev Lett
  102(25):253,902, \doi{10.1103/PhysRevLett.102.253902}

\bibitem[{Leonhardt(2006)}]{Leonhardt:2006:OCM}
Leonhardt U (2006) Optical conformal mapping. Science 312:1777--1780

\bibitem[{Leonhardt and Philbin(2006)}]{Leonhardt:2006:GRE}
Leonhardt U, Philbin TG (2006) General relativity in electrical engineering.
  New J Phys 8:247

\bibitem[{Leonhardt and Smith(2008)}]{Leonhardt:2008:FCT}
Leonhardt U, Smith DR (2008) Focus on cloaking and transformation optics. New J
  Phys 10:115,019

\bibitem[{Malyuzhinets(1964)}]{Malyuzhinets:1964:TAF}
Malyuzhinets GD (1964) One theorem for analytic functions and its
  generalizations for wave potentials. Third All-Union Symposium on Wave
  Diffraction, (Tbilisi, 24-30 September 1964), abstracts of reports

\bibitem[{Miller(2006)}]{Miller:2007:PC}
Miller DAB (2006) On perfect cloaking. Opt Express 14:12,457--12,466

\bibitem[{Milton(2007)}]{Milton:2007:NMM}
Milton GW (2007) New metamaterials with macroscopic behavior outside that of
  continuum elastodynamics. New J Phys 9:359

\bibitem[{Milton(2010)}]{Milton:2010:RMP}
Milton GW (2010) Realizability of metamaterials with prescribed electric
  permittivity and magnetic permeability tensors. New J Phys 12:033,035

\bibitem[{Milton and Nicorovici(2006)}]{Milton:2006:CEA}
Milton GW, Nicorovici NAP (2006) On the cloaking effects associated with
  anomalous localized resonance. Proc R Soc Lon Ser A Math Phys Sci
  462:3027--3059

\bibitem[{Milton and Seppecher(2008)}]{Milton:2008:RRM}
Milton GW, Seppecher P (2008) Realizable response matrices of multiterminal
  electrical, acoustic, and elastodynamic networks at a given frequency. Proc R
  Soc Lon Ser A Math Phys Sci 464(2092):967--986

\bibitem[{Milton et~al(2006)Milton, Briane, and Willis}]{Milton:2006:CEP}
Milton GW, Briane M, Willis JR (2006) On cloaking for elasticity and physical
  equations with a transformation invariant form. New J Phys 8:248

\bibitem[{Milton et~al(2008)Milton, Nicorovici, McPhedran, Cherednichenko, and
  Jacob}]{Milton:2008:SFG}
Milton GW, Nicorovici NAP, McPhedran RC, Cherednichenko K, Jacob Z (2008)
  Solutions in folded geometries, and associated cloaking due to anomalous
  resonance. New J Phys 10:115,021

\bibitem[{Nicorovici et~al(1994)Nicorovici, McPhedran, and
  Milton}]{Nicorovici:1994:ODP}
Nicorovici NA, McPhedran RC, Milton GW (1994) Optical and dielectric properties
  of partially resonant composites. Phys Rev B 49:8479--8482

\bibitem[{Nicorovici et~al(2007)Nicorovici, Milton, McPhedran, and
  Botten}]{Nicorovici:2007:OCT}
Nicorovici NAP, Milton GW, McPhedran RC, Botten LC (2007) Quasistatic cloaking
  of two-dimensional polarizable discrete systems by anomalous resonance. Opt
  Express 15:6314--6323

\bibitem[{Olver et~al(2010)Olver, Lozier, Boisvert, and Clark}]{olver:2010:NHM}
Olver FWJ, Lozier DW, Boisvert RF, Clark CW (eds)  (2010) N{IST} handbook of
  mathematical functions. U.S. Department of Commerce National Institute of
  Standards and Technology, Washington, DC

\bibitem[{Pendry(2000)}]{Pendry:2000:NRM}
Pendry JB (2000) Negative refraction makes a perfect lens. Phys Rev Lett
  85:3966--3969

\bibitem[{Pendry et~al(2006)Pendry, Schurig, and Smith}]{Pendry:2006:CEM}
Pendry JB, Schurig D, Smith DR (2006) Controlling electromagnetic fields.
  Science 312:1780--1782

\bibitem[{Rahm et~al(2008)Rahm, Schurig, Roberts, Cummer, Smith, and
  Pendry}]{Rahm:2008:DEC}
Rahm M, Schurig D, Roberts DA, Cummer SA, Smith DR, Pendry JB (2008) Design of
  electromagnetic cloaks and concentrators using form-invariant coordinate
  transformations of {M}axwell's equations. Photonics Nanostruc 6:87--95,
  \doi{10.1016/j.photonics.2007.07.013}

\bibitem[{Schoenberg and Sen(1983)}]{Schoenberg:1983:PPS}
Schoenberg M, Sen PN (1983) Properties of a periodically stratified acoustic
  half-space and its relation to a {Biot} fluid. J Acoust Soc Am 73(1):61--67

\bibitem[{Schurig(2008)}]{Schurig:2008:AFL}
Schurig D (2008) An aberration-free lens with zero {F}-number. New J Phys
  10:115,034

\bibitem[{Serdikukov et~al(2001)Serdikukov, Semchenko, Tretkyakov, and
  Sihvola}]{Serdyukov:2001:EBAM}
Serdikukov A, Semchenko I, Tretkyakov S, Sihvola A (2001) Electromagnetics of
  Bi-anisotropic Materials, Theory and Applications. Gordon and Breach,
  Amsterdam

\bibitem[{Stoer and Bulirsch(2002)}]{Stoer:2002:INA}
Stoer J, Bulirsch R (2002) Introduction to numerical analysis, Texts in Applied
  Mathematics, vol~12, 3rd edn. Springer-Verlag, New York, translated from the
  German by R. Bartels, W. Gautschi and C. Witzgall

\bibitem[{Willis(1981)}]{Willis:1981:VPDP}
Willis JR (1981) Variational principles for dynamic problems for inhomogeneous
  elastic media. Wave Motion 3:1--11

\bibitem[{Yang et~al(2008)Yang, Chen, Luo, and Ma}]{Yang:2008:SES}
Yang T, Chen H, Luo X, Ma H (2008) Superscatterer: Enhancement of scattering
  with complementary media. Opt Express 16:18,545--18,550,
  \doi{10.1364/OE.16.018545}

\bibitem[{Zheng et~al(2010)Zheng, Xiao, Lai, and Chan}]{Zheng:2010:EOC}
Zheng HH, Xiao JJ, Lai Y, Chan CT (2010) Exterior optical cloaking and
  illusions by using active sources: A boundary element perspective. Phys Rev B
  81(19):195,116, \doi{10.1103/PhysRevB.81.195116}

\end{thebibliography}

\end{document}